\documentclass[a4paper,aps,pra,10pt,twocolumn,floatfix,superscriptaddress,notitlepage,usenames,dvipsnames,svgnames,table,nofootinbib,longbibliography]{revtex4-2}
\usepackage{float}       
\usepackage{subcaption}  
\usepackage{adjustbox}   

\usepackage{tikz}
\usetikzlibrary{quantikz2}

\usepackage[margin=1in]{geometry}

\usepackage{graphicx} 
\usepackage{subcaption}
\usepackage{adjustbox}

\usepackage{amsmath}
\usepackage{braket}
\DeclareMathOperator{\Tr}{Tr}

\usepackage{ulem}

\usepackage{amssymb}
\usepackage{amsthm}
\usepackage{hyperref}
\usepackage{cleveref}

\newtheorem{theorem}{Theorem}[section]
\newtheorem{corollary}{Corollary}[theorem]
\newtheorem{lemma}[theorem]{Lemma}
\newtheorem{definition}{Definition}
\newtheorem{proposition}{Proposition}

\usepackage{xcolor}


\usepackage{pgfplots}
\definecolor{blue-violet}{rgb}{0.54, 0.17, 0.89}

\usepackage{hyperref}
\hypersetup{
 pdfauthor={Emanuel Dallas, Faidon Andreadakis, and Paolo Zanardi},
 pdftitle={A Butterfly Effect in Encoding-Decoding Quantum Circuits},
 pdflang={English},colorlinks=true,linkcolor=RubineRed,citecolor=blue-violet,urlcolor=Cerulean}

\begin{document}

\title{A Butterfly Effect in Encoding-Decoding Quantum Circuits}

\author{Emanuel Dallas}
\email [e-mail: ]{dallas@usc.edu}
\affiliation{Department of Physics and Astronomy, and Center for Quantum Information Science and Technology, University of Southern California, Los Angeles, California 90089-0484, USA}
\author{Faidon Andreadakis}
\email [e-mail: ]{fandread@usc.edu}
\affiliation{Department of Physics and Astronomy, and Center for Quantum Information Science and Technology, University of Southern California, Los Angeles, California 90089-0484, USA}
\author{Paolo Zanardi}
\email [e-mail: ]{zanardi@usc.edu}
\affiliation{Department of Physics and Astronomy, and Center for Quantum Information Science and Technology, University of Southern California, Los Angeles, California 90089-0484, USA}
\affiliation{Department of Mathematics, University of Southern California, Los Angeles, California 90089-2532, USA}

\date{\today}
\begin{abstract}
The study of information scrambling has profoundly deepened our understanding of many-body quantum systems. Much recent research has been devoted to understanding the interplay between scrambling and decoherence in open systems. Continuing in this vein, we investigate scrambling in a noisy encoding-decoding circuit model. Specifically, we consider an $L$-qubit circuit consisting of a Haar-random unitary, followed by noise acting on a subset of qubits, and then by the inverse unitary. Scrambling is measured using the bipartite algebraic out-of-time-order correlator ($\mathcal{A}$-OTOC), which allows us to track information spread between extensively sized subsystems. We derive an analytic expression for the $\mathcal{A}$-OTOC that depends on system size and noise strength. In the thermodynamic limit, this system displays a \textit{butterfly effect} in which infinitesimal noise induces macroscopic information scrambling. We also perform numerical simulations while relaxing the condition of Haar-randomness, which preliminarily suggest that this effect may manifest in a larger set of circuits.
\end{abstract}

\maketitle

\section{Introduction}\label{intro}

The study of information spreading in quantum systems from local to non-local degrees of freedom, termed ``information scrambling," has profoundly deepened the understanding of many-body quantum systems in recent years. It has elucidated phenomena ranging from black holes \cite{blackhole1,blackhole2,blackhole3} to thermalization in closed quantum systems \cite{therm1} while also bolstering the underpinnings of the established field of quantum chaos \cite{NamitBipartite,chaos1,otoc2}. A wide range of diagnostic tools have been employed to probe quantum chaos, such as Hamiltonian spectral statistics \cite{spectral1,spectral2,spectral3}, the Loschmidt echo \cite{Loschmidt_1,Loschmidt_2,Loschmidt_3}, and out-of-time-order correlators (OTOCs) \cite{otoc1,otoc2,otoc3}. Strongly chaotic systems may even display exponential sensitivity to small perturbations, a phenomenon familiarly known as the \textit{butterfly effect} \cite{butterfly1,butterfly2}.

Open system dynamics introduce an additional layer of complexity to these phenomena. Now, there is a ``competition" between scrambling dynamics and information lost to decoherent noise \cite{aotoc_faidon}. Measurement-induced phase transitions comprise a class of such phenomena in which this competition generates complex behavior \cite{mipt1,mipt2,mipt3}. Recently, Lovas \textit{et al.} have demonstrated that boundary dissipation induces a ``quantum coding transition" between preserved and fully lost information in a Haar-random brickwork circuit \cite{Lovas}. Turkeshi and Sierant have analyzed an ``encoding-decoding" circuit model in which ancillary qubits are appended to an initial multi-qubit state; all qubits are then scrambled by a random unitary, acted on by local noise, and unscrambled; then the ancillas are projected onto their initial state \cite{Turkeshi}. They observed an ``error-resilience phase transition" of the fidelity between the initial and final state.

This paper, inspired by these recent results and models, investigates a similar encoding-decoding circuit model which integrates scrambling, chaos, and open system dynamics, yielding analytically tractable results. Specifically, we consider an $L$-qubit circuit consisting of a Haar-random unitary $U$, followed by identical noise processes on $k$ qubits, followed by $U^\dagger$. This simple model allows us to clearly observe the effects of increasing noise strength in a state-independent manner. Of course, local Hamiltonians generate a much more restricted set of unitary dynamics than the full unitary group. However, random matrices provide an analytically tractable proxy for ``maximally scrambling'' dynamics, and the scrambling properties of locally interacting chaotic systems have been observed to quickly equilibrate to near the typical value predicted by random matrix theory \cite{rmt1,rmt2,rmt3}.

Our metric of scrambling is the bipartite form of the algebraic out-of-time-order correlator ($\mathcal{A}$-OTOC \cite{aotoc_faidon}), introduced in \cite{StyliarisBipartite} for closed and \cite{NamitBipartite} for open systems. The bipartite $\mathcal{A}$-OTOC tracks the scrambling between two subsystems and has already been demonstrated as a tool for distinguishing between integrable and chaotic dynamics \cite{StyliarisBipartite,NamitBipartite}. The $A$-OTOC, involving a Haar average over observables in each subsystem, has the nice property of depending only on the selected bipartition and dynamics, allowing us to glean very general insights. In this paper, we select for the two subsystems each half of the qubit chain to probe scrambling that is extensive in system size.

We derive an analytic formula for the $A$-OTOC that depends on the noise channel $\mathcal{E}$ and $L$. In the limit that $L \rightarrow \infty$, the formula dramatically simplifies. From this, we immediately observe a butterfly effect in which noise on finitely many qubits causes ``macroscopic" scrambling. We also derive expressions for the $A$-OTOC under extensively scaling noise ($k \propto L)$ and connect these to the competition between decoherence and scrambling. We then present multiple explicit examples and numerically demonstrate that noisy brickwork circuits accord well with theory, allowing for the possibility of observation on real quantum hardware and providence evidence that these phenomena may extend beyond Haar-random unitaries.

In \cref{mathprelim}, we provide mathematical background on the $A$-OTOC. In \cref{physsetup}, we introduce the physical model we will analyze. In \cref{mainresults}, we present formulae for the $A$-OTOC under various limits and noise channels, and derive a bound on it in the $L \rightarrow \infty$ limit. In \cref{examples}, we present two examples that analytically demonstrate the butterfly effect; we also numerically simulate the two examples on finitely many qubits and compare with the theory. Finally, \cref{conclusion} provides a summary and avenues for future work.

\section{Mathematical preliminaries}\label{mathprelim}

\subsection{The general \texorpdfstring{$\mathcal{A}$-OTOC}{A-OTOC}}

Let $\mathcal{H} \cong \mathbb{C}^d$ be a finite-dimensional Hilbert space representing a quantum system. Any physical observable within this system is described by a linear operator, and we denote the space of all such operators by $\mathcal{L}(\mathcal{H})$. This space, $\mathcal{L}(\mathcal{H})$, also forms a Hilbert space, equipped with the Hilbert-Schmidt inner product, defined as $\langle X, Y\rangle=\Tr\left(X^{\dagger} Y\right)$ for operators $X$ and $Y$. We denote $\lVert X \rVert_2^2 = \Tr(X^{\dagger}X)$. In the Heisenberg picture, the evolution of a physical observable $X \in \mathcal{L}(\mathcal{H})$ in an open quantum system is given by $\mathcal{E}(X)$, where $\mathcal{E}^\dagger$ is the CPTP map governing state evolution.

Our quantities of interest make use of averages over the Haar-random unitaries. These are operators sampled over the Haar measure, the unique probability measure that is left- and right-invariant over the full unitary group; these properties ensure the measure is ``uniform" over the unitary group (see \cite{Mele2024introductiontohaar} for a more extensive overview).

The key mathematical objects in the present work are Hermitian-closed, unital subalgebras $\mathcal{A} \subset \mathcal{L}(\mathcal{H})$, which describe the degrees of freedom of interest. The commutant algebra $\mathcal{A}^{\prime}=\left\{Y \in \mathcal{A}^{\prime} \mid[Y, X]=0 \text{ for all } X \in \mathcal{A}\right\}$ captures the symmetries of $\mathcal{A}$ and corresponds to degrees of freedom that are initially uncorrelated with $\mathcal{A}$. By the double commutant theorem \cite{doublecommutantthm}, $\left(\mathcal{A}^{\prime}\right)^{\prime}=\mathcal{A}$, so these algebras can be viewed as pairs $\left(\mathcal{A}, \mathcal{A}^{\prime}\right)$. During time evolution, information is exchanged between $\mathcal{A}$ and $\mathcal{A}^{\prime}$, which we quantify with the $\mathcal{A}$-OTOC \cite{aotoc_faidon}.

\begin{definition}\label{aotoc_def}
    The \textbf{$\mathcal{A}$-OTOC} of algebra $\mathcal{A}$ and channel $\mathcal{E}$ is defined as:
    \begin{equation}
        G_{\mathcal{A}}\left(\mathcal{E}\right)=\frac{1}{2 d} \mathbb{E}_{X_{\mathcal{A}}, Y_{\mathcal{A}^{\prime}}}\left[\left\|\left[X_{\mathcal{A}}, \mathcal{E}\left(Y_{\mathcal{A}^{\prime}}\right)\right]\right\|_2^2\right]
    \end{equation}
$\mathbb{E}_{X_{\mathcal{A}}, Y_{\mathcal{A}^{\prime}}}$ denotes the Haar average over the unitaries $X_{\mathcal{A}} \in \mathcal{A}$ and $Y_{\mathcal{A}^{\prime}} \in \mathcal{A}^{\prime}$. 
\end{definition}

The evolution of operators in $\mathcal{A}^{\prime}$ under $\mathcal{E}$ may lead to non-commutativity with operators in $\mathcal{A}$, which is interpreted as scrambling between the corresponding degrees of freedom.

In this paper, we reserve $\mathcal{E}$ to refer to the map in the Heisenberg picture, and we consider maps that are CPTP in the Schrodinger picture (i.e. $\mathcal{E}^\dagger$ is CPTP). This implies that $\mathcal{E}$ is unital.

\subsection{Bipartite \texorpdfstring{$\mathcal{A}$-OTOC}{A-OTOC}}

We now consider the case of bipartite algebras. Let the Hilbert space $\mathcal{H}_{AB} \equiv \mathcal{H}_A \otimes \mathcal{H}_B$. Our algebra is now $\mathcal{A} = \mathbb{I}_A\otimes L(\mathcal{H}_B)$, with $\mathcal{A}' = L(\mathcal{H}_A)\otimes \mathbb{I}_B$. Henceforth, we will denote the bipartite $\mathcal{A}$-OTOC as $G$.

In \cite{NamitBipartite}, the following was shown: 
\begin{proposition}\label{aotoc_swap}
    Let $S \equiv S_{A A^{\prime} B B^{\prime}}$ be the swap operator over $\mathcal{H}_{A B} \otimes \mathcal{H}_{A^{\prime} B^{\prime}} \ (= \mathcal{H}_A \otimes \mathcal{H}_{B} \otimes \mathcal{H}_{A'} \otimes \mathcal{H}_{B'} )$ which swaps operators on $\mathcal{H}_A$ with $\mathcal{H}_{A'}$ and $\mathcal{H}_B$ with $\mathcal{H}_{B'}$. Similarly, $S_{AA'}$ swaps operators on $\mathcal{H}_A$ with $\mathcal{H}_{A'}$. Then for a quantum channel $\mathcal{E}: \mathcal{L}\left(\mathcal{H}_{A B}\right) \rightarrow$ $\mathcal{L}\left(\mathcal{H}_{A B}\right)$, the bipartite $\mathcal{A}$-OTOC is:
    \begin{equation}
        G(\mathcal{E})=\frac{1}{d^2} \operatorname{Tr}\left(\left(d_B S-S_{A A^{\prime}}\right) \mathcal{E}^{\otimes 2}\left(S_{A A^{\prime}}\right)\right).
    \end{equation}
\end{proposition}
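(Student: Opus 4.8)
The plan is to specialize Definition~\ref{aotoc_def} to the bipartite algebra and carry out both Haar averages using the single-unitary twirl in ``replica'' form on the doubled space $\mathcal H_{AB}\otimes\mathcal H_{A'B'}$. Unitaries in $\mathcal A$ and $\mathcal A'$ are exactly $X_{\mathcal A}=\mathbb I_A\otimes X_B$ and $Y_{\mathcal A'}=Y_A\otimes\mathbb I_B$ with $X_B,Y_A$ Haar-random on $\mathcal H_B$ and $\mathcal H_A$. Writing $Z\equiv\mathcal E(Y_{\mathcal A'})$ and using unitarity of $X_{\mathcal A}$, the first step is the elementary expansion
\[
\bigl\|[X_{\mathcal A},Z]\bigr\|_2^2 = 2\,\Tr(Z^\dagger Z) - \Tr(Z^\dagger X_{\mathcal A}^\dagger Z X_{\mathcal A}) - \Tr(X_{\mathcal A}^\dagger Z^\dagger X_{\mathcal A} Z).
\]
Because $\mathcal E^\dagger$ is CPTP the map $\mathcal E$ is CP, hence Hermiticity-preserving, so $Z^\dagger=\mathcal E(Y_{\mathcal A'}^\dagger)$; moreover, by invariance of the Haar measure under $X_{\mathcal A}\mapsto X_{\mathcal A}^\dagger$ and cyclicity of the trace, the two cross terms contribute equally, so only one needs to be computed.

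The second step lifts each trace to $\mathcal H^{\otimes 2}$ via the replica identity $\Tr(MN)=\Tr\bigl((M\otimes N)\,S\bigr)$, where $S=S_{AA'}S_{BB'}$ is the total swap, together with $(AB)\otimes(CD)=(A\otimes C)(B\otimes D)$. This turns $\Tr(Z^\dagger Z)$ into $\Tr\bigl((Z^\dagger\otimes Z)S\bigr)$ and the cross term into $\Tr\bigl((Z^\dagger\otimes Z)(X_{\mathcal A}^\dagger\otimes X_{\mathcal A})S\bigr)$, and Hermiticity-preservation gives $Z^\dagger\otimes Z=\mathcal E^{\otimes 2}(Y_{\mathcal A'}^\dagger\otimes Y_{\mathcal A'})$, putting the dynamics in the desired form $\mathcal E^{\otimes 2}(\cdot)$.

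The third step does the averages using the identity $\mathbb E_V[V^\dagger\otimes V]=\tfrac1n S$ for a Haar-random $V$ on an $n$-dimensional space (equivalently, the partial transpose of the maximally-entangled operator $|\Omega\rangle\langle\Omega|$ is the swap). Regrouping $\mathcal H_{AB}\otimes\mathcal H_{A'B'}$ as $\mathcal H_{AA'}\otimes\mathcal H_{BB'}$, this yields $\mathbb E_{X_B}[X_{\mathcal A}^\dagger\otimes X_{\mathcal A}]=\tfrac1{d_B}S_{BB'}$ and $\mathbb E_{Y_A}[Y_{\mathcal A'}^\dagger\otimes Y_{\mathcal A'}]=\tfrac1{d_A}S_{AA'}$; since $\mathcal E^{\otimes 2}$ is linear it passes outside the $Y$-average, leaving everything in terms of $\mathcal E^{\otimes 2}(S_{AA'})$. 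Using swap simplifications such as $S_{BB'}S=S_{AA'}$, the cross term becomes $\tfrac1{d_Ad_B}\Tr\bigl(\mathcal E^{\otimes2}(S_{AA'})S_{AA'}\bigr)$ and the first term becomes $\tfrac1{d_A}\Tr\bigl(\mathcal E^{\otimes2}(S_{AA'})S\bigr)$; combining these with the prefactor $\tfrac1{2d}$, using $d=d_Ad_B$, and invoking cyclicity gives $G(\mathcal E)=\tfrac1{d^2}\Tr\bigl((d_B S-S_{AA'})\,\mathcal E^{\otimes 2}(S_{AA'})\bigr)$.

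The main obstacle is bookkeeping rather than anything conceptual: one must track which tensor slot each operator occupies, consistently reorganize $\mathcal H_{AB}\otimes\mathcal H_{A'B'}$ into $\mathcal H_{AA'}\otimes\mathcal H_{BB'}$ when applying the partial twirls, and pin down the conjugation/transposition conventions in $\mathbb E_V[V^\dagger\otimes V]=\tfrac1n S$ exactly (it is easy to misremember this as $\mathbb E_V[V\otimes\bar V]=\tfrac1n|\Omega\rangle\langle\Omega|$ and then mishandle the partial transpose). Beyond that, everything is routine linear algebra.
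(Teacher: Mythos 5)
Your proof is correct. Note that the paper itself does not prove this proposition --- it is imported verbatim from Ref.~\cite{NamitBipartite} --- so there is no in-text argument to compare against, but your derivation is the standard one and every step checks out: the commutator expansion using unitarity of $X_{\mathcal A}$, the use of Hermiticity preservation to write $Z^\dagger\otimes Z=\mathcal E^{\otimes2}(Y_{\mathcal A'}^\dagger\otimes Y_{\mathcal A'})$, the twirl identity $\mathbb E_V[V^\dagger\otimes V]=\tfrac1n S$ applied separately on the $BB'$ and $AA'$ factors, and the swap simplification $S_{BB'}S=S_{AA'}$. The final bookkeeping also works: $\tfrac{1}{2d}\cdot\tfrac{2}{d_A}=\tfrac{d_B}{d^2}$, which produces exactly the stated prefactor $\tfrac{1}{d^2}(d_BS-S_{AA'})$.
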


This is the formula we use to derive our results in \cref{mainresults}.

An alternative formula for the bipartite $\mathcal{A}$-OTOC was also derived \cite{NamitBipartite}. Let us denote $\tilde{\mathcal{E}}(X) \equiv \mathcal{E}(X \otimes \frac{\mathbb{I}}{d_B})$. Then:
\begin{proposition}\label{aotoc_state}
    We denote by $\psi:=|\psi\rangle\langle\psi|$ with $|\psi\rangle \in \mathcal{H}_A$. Then,
    \begin{equation}
        G(\mathcal{E})=N_A \mathbb{E}_\psi\left[S_L\left(\operatorname{Tr}_B \widetilde{\mathcal{E}}(\psi)\right)-d_B\left(S_L(\widetilde{\mathcal{E}}(\psi))-S_L^{\min }\right)\right]
    \end{equation}
where $\mathbb{E}_\psi$ is the the Haar average over $\mathcal{H}_A, N_A:=\frac{d_A+1}{d_A}$, the linear entropy $S_L(\rho) \equiv 1 - \Tr(\rho^2)$, and $S_L^{\min }:=1-\frac{1}{d_B}$. 
\end{proposition}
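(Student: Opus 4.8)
The plan is to reduce the swap-operator formula of \cref{aotoc_swap} to the pure-state average by (i) rewriting the input $S_{AA'}$ so that $\mathcal{E}$ enters only through $\widetilde{\mathcal{E}}$, and (ii) replacing the swap on $\mathcal{H}_A\otimes\mathcal{H}_{A'}$ by a Haar second moment over $\mathcal{H}_A$.

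First I would expand $S_{AA'}=\big(\sum_{ij}|i\rangle\!\langle j|_A\otimes|j\rangle\!\langle i|_{A'}\big)\otimes\mathbb{I}_{BB'}$ and regroup it as a sum of products of operators on $\mathcal{H}_{AB}$ with operators on $\mathcal{H}_{A'B'}$. Since $\mathcal{E}^{\otimes2}$ is one copy of $\mathcal{E}$ on each factor and $\mathcal{E}(X\otimes\mathbb{I}_B)=d_B\,\widetilde{\mathcal{E}}(X)$, this yields $\mathcal{E}^{\otimes2}(S_{AA'})=d_B^2\,\widetilde{\mathcal{E}}^{\otimes2}(S_{AA'})$, where on the right $S_{AA'}$ now denotes the swap on $\mathcal{H}_A\otimes\mathcal{H}_{A'}$. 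Next, the standard second-moment identity $\mathbb{E}_\psi[\psi^{\otimes2}]=\frac{1}{d_A(d_A+1)}(\mathbb{I}_{AA'}+S_{AA'})$ lets me substitute $S_{AA'}=d_A(d_A+1)\,\mathbb{E}_\psi[\psi^{\otimes2}]-\mathbb{I}_{AA'}$; pushing $\widetilde{\mathcal{E}}^{\otimes2}$ through and using unitality of $\mathcal{E}$ (which gives $\widetilde{\mathcal{E}}(\mathbb{I}_A)=\mathbb{I}_{AB}/d_B$, hence $\widetilde{\mathcal{E}}^{\otimes2}(\mathbb{I}_{AA'})=\mathbb{I}/d_B^2$) produces $\mathcal{E}^{\otimes2}(S_{AA'})=d_A(d_A+1)d_B^2\,\mathbb{E}_\psi[\widetilde{\mathcal{E}}(\psi)^{\otimes2}]-\mathbb{I}$. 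Plugging into \cref{aotoc_swap} and using $d=d_Ad_B$, the prefactor collapses to $d_A(d_A+1)d_B^2/d^2=N_A$, while the $-\mathbb{I}$ term contributes a multiple of $\Tr(d_B S-S_{AA'})=d_B d-d_A d_B^2=0$ and drops out, leaving $G(\mathcal{E})=N_A\,\mathbb{E}_\psi\big[\Tr\big((d_B S-S_{AA'})\,\widetilde{\mathcal{E}}(\psi)^{\otimes2}\big)\big]$.

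To finish I would evaluate the inner trace for a fixed $\psi$, writing $\rho:=\widetilde{\mathcal{E}}(\psi)$. The swap trick gives $\Tr(S\,\rho^{\otimes2})=\Tr(\rho^2)$, while commuting the partial trace over $B,B'$ past the $A$-swap gives $\Tr\big((S_{AA'}\otimes\mathbb{I}_{BB'})\rho^{\otimes2}\big)=\Tr\big(S_{AA'}(\Tr_B\rho)^{\otimes2}\big)=\Tr\big((\Tr_B\rho)^2\big)$. Hence the bracket equals $d_B\Tr(\rho^2)-\Tr\big((\Tr_B\rho)^2\big)$, which a one-line rearrangement using $S_L(\sigma)=1-\Tr(\sigma^2)$ and $S_L^{\min}=1-1/d_B$ rewrites as $S_L(\Tr_B\rho)-d_B\big(S_L(\rho)-S_L^{\min}\big)$, giving the claimed formula. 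All steps are elementary; the only thing that needs genuine care is the tensor-factor bookkeeping in the reorganization $\mathcal{H}_{AB}\otimes\mathcal{H}_{A'B'}\cong(\mathcal{H}_A\otimes\mathcal{H}_{A'})\otimes(\mathcal{H}_B\otimes\mathcal{H}_{B'})$ — tracking which swap acts on which pair of copies and moving the $B,B'$ partial traces correctly. It is also worth noting that $\widetilde{\mathcal{E}}(\psi)$ need not be normalized, since $\mathcal{E}$ is unital but generally not trace-preserving; this is harmless, as every manipulation above is purely algebraic in $\rho$.
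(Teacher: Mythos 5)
Your derivation is correct. The paper itself does not prove this proposition (it is imported from Ref.~\cite{NamitBipartite}), but your argument gives a clean, self-contained derivation from \cref{aotoc_swap}: the factorization $\mathcal{E}^{\otimes 2}(S_{AA'})=d_B^2\,\widetilde{\mathcal{E}}^{\otimes 2}(S_{AA'})$, the substitution of the second-moment identity $\mathbb{E}_\psi[\psi^{\otimes 2}]=(\mathbb{I}+S_{AA'})/(d_A(d_A+1))$, the cancellation of the identity contribution via $\Tr(d_BS-S_{AA'})=0$, and the final swap-trick evaluation $d_B\Tr(\rho^2)-\Tr((\Tr_B\rho)^2)=S_L(\Tr_B\rho)-d_B(S_L(\rho)-S_L^{\min})$ are all checked and correct, and you rightly flag that unitality of $\mathcal{E}$ is the only structural input and that normalization of $\widetilde{\mathcal{E}}(\psi)$ is irrelevant to the algebra.
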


We make use of this formula for numerics (\cref{numerics}), in which it is easier to sample a finite set of states to reach an approximate value for $G(\mathcal{E})$. For the remainder of this paper, we will take $\mathcal{H}_A\cong (\mathbb{C}^2)^{\otimes L/2}$ to be the first set of $L/2$ qubits and $\mathcal{H}_B\cong (\mathbb{C}^2)^{\otimes L/2}$ to be the second set of $L/2$ qubits.

\section{Physical Setup}\label{physsetup}

We consider a chain of $L$ qubits, where $L$ is even. 
Let $U$ be an $L$-qubit unitary, 
and let $\mathcal{E}$ be a single-qubit CP map. Let the adjoint action of $U$, in the Heisenberg picture, be denoted as $\mathcal{U}$, so that an operator $A$ gets sent to $\mathcal{U}(A)$ by the action of $U$.

We consider the quantum map $\mathcal{C} \equiv \mathcal{U^\dagger}\mathcal{E}^{\otimes k}\mathcal{U}$ in the Heisenberg picture. For what follows, without loss of generality, we can assume that $\mathcal{E}$ is applied to the ``first" $k$ qubit sites.

\begin{figure}
    \centering
    \begin{quantikz}[row sep = {0.6cm, between origins}]
    &\gate[4, style={rounded corners, fill=red!20}][1cm]{U}&[0.3cm]\measure[style = {fill=yellow}]{\mathcal{E}}\gategroup[wires=2,steps=2,style={inner xsep=0pt, inner ysep=0pt, outer sep=0pt, draw=none}, label style={label position=right, anchor=east, xshift=-0.75cm, yshift = -0.2cm}]{\parbox{3cm}{\baselineskip=0.4\baselineskip k qubits \\ \makebox[1.3cm]{.} \\ \makebox[1.3cm]{.} \\\makebox[1.3cm]{.}}}
    &[0.5cm]\gate[4, style={rounded corners, fill=red!20}][1cm]{U^\dagger}&\\[1.2cm]
    &&\measure[style = {fill=yellow}]{\mathcal{E}}&& \\
    &&&& \\
    &&&& 
    \end{quantikz}

    \caption{Circuit diagram of $\mathcal{C}$.}
\end{figure}

The map $\mathcal{C}$ corresponds to a unitary evolution, a (possibly) noisy perturbation, and an undoing of the unitary evolution. Analyzing the information scrambling properties of this map gives insight into the extent to which local, possibly infinitesimal perturbations can cause global, macroscopic effects -- namely, this is an example of quantum chaos.
To make analytic results tractable, we calculate the average bipartite $\mathcal{A}$-OTOC over Haar-random $U$, denoted as $\overline{G(\mathcal{C})} \equiv \mathbb{E}_U G(\mathcal{U}^\dagger \mathcal{E}^{\otimes k} \mathcal{U})$. For large $L$, the phenomenon of measure concentration implies that typical Haar-random unitaries fall close to this average \cite{measureconcentration}, so it gives insight into generic behavior. We expand on this with a formal proof in \cref{app_typicality}.

\subsection{Connection to Loschmidt Echo}

The bipartite $\mathcal{A}$-OTOC in this model can be viewed qualitatively as a sort of generalization of the Loschmidt echo, which we now define.

Consider a state $\ket{\psi}$ evolving under a Hamiltonian $H$ with small perturbation $V$ for time $t$. The Loschmidt echo, introduced in \cite{Loschmidt_1}, is defined as:
\begin{equation}\label{loschmidt_echo}
    M(t) \equiv \left| \bra{\psi}e^{i(H+V)t}e^{-iHt}\ket{\psi}\right|^2.
\end{equation}

Namely, the Loschmidt echo measures the fidelity between a state evolved by the perturbed Hamiltonian for time $t$ and the same state evolved by the unperturbed Hamiltonian for the same time. Another, equivalent intuition is that it tracks how far $\ket{\psi}$ deviates from itself after evolving forward in time under $H$ and backward in time under $H+V$. As such, the behavior of $M(t)$ with respect to $t$ can reveal to what extent small perturbations yield macroscopic changes, capturing the chaoticity of $H$. Its connections to scrambling and chaos have been studied extensively \cite{Loschmidt_2,Loschmidt_3}.

$G(\mathcal{C})$ is an average over operators, capturing more general features of the dynamics $U$ than a state- or operator-dependent quantity. However, like the Loschmidt echo, it captures a similar notion of evolving ``forward" by $U$ and $\mathcal{E}^{\otimes k}$ and ``backward" under $U^\dagger$, specifically tracking the propensity of the noise to induce scrambling. Understanding the behavior of $G(\mathcal{C})$ gives insight into the chaoticity of a given $U$.

\section{Main results}\label{mainresults}

To arrive at results for this model, we insert $\mathcal{C}$ in place of the generic $\mathcal{E}$ in \cref{aotoc_swap}. This expression contains four copies of the unitary channel $\mathcal{U}$. This means that we only need unitaries drawn from a 4-design \cite{t_design} to reproduce the Haar average. We can analytically compute this average by linearizing the expression and using Schur-Weyl duality \cite{weingarten_1}. The full details are presented in \cref{appendix}. Since this average involves a large number of terms and group-theoretic data as inputs, we created a Mathematica notebook that performs the analytic calculations.

To simplify our results, we invoke the following:

\begin{definition}[Natural representation]\label{naturalrep}
    Let $\mathcal{E}$ have the Kraus representation:
\begin{equation}
    \mathcal{E} = \sum_i K_i (\cdot) K_i^\dagger.
\end{equation}

Then $X \equiv \sum_i K_i \otimes K_i^\dagger$ is the \textbf{natural representation} of $\mathcal{E}$.
\end{definition}

One can show that this representation is unique for a given map, i.e. it is independent of any particular choice of Kraus operators \cite{naturalrepresentation}. Since we work only with generic $\mathcal{E}$, we omit any subscript label on $X$ that explicitly refers to $\mathcal{E}$.

This all results in the following:
\begin{lemma}[$\overline{G(\mathcal{C})}$ for finite qubit chain]
Let $\mathcal{E}$ be a CPTP map. Let $\rho_{max}$ be the maximally mixed single-qubit state, $\gamma$ be the purity, and $S$ be the swap operator on $\mathcal{H}^{\otimes 2} \equiv \mathcal{H}_1\otimes\mathcal{H}_2$. Then 
\begin{widetext}
\begin{equation}
    \begin{split}
        \overline{G(\mathcal{C})} =& 
        \left (\frac{-2^{-4k}\cdot \left(2^{-L} - 2^{-2L} \right)}{\left(-3+2^L\right)\left(-2+2^L\right)\left(1+2^L\right)\left(2+2^L\right)\left(3+2^L\right)}\right)\cdot
        \\ 
        &\Bigg[ \left(2^{6L}-2^{4L+2}\right) \Tr(X)^{2k} - 6\cdot 2^k\left(2^{4L}-2^{2L+2}\right)\Re\{(\Tr(\Tr_1{X})^2)^k)\}
        \\
        & - 2^{2k+1}\left(2^{4L}-2^{2L+1}\right)\Tr(X)^k - 2^{k+1}\left(2^{4L}+3\cdot 2^{2L+1}\right)(\Tr(\Tr_1 X\Tr_2 X))^k
        \\
        &  - 5\cdot 2^{2k+2} 2^{2L}\Re\{(\Tr(\mathcal{E}\otimes \mathbb{I}(X)))^k\} - 2^{3k+1}\left(2^{2L} + 6\right)\Tr(SX^2)^k + 2^{2k}\left(2^{4L}-2^{2L+2} \right)(\Tr(X^2))^k 
        \\
        & - 2^{2k}\left(2^{6L} - 3\cdot 2^{4L+2} +22\cdot 2^{2L} \right)\left\lvert\left\lvert X\right\rvert\right\rvert_2^{2k} + 2^{5k}\left(2^{4L} - 3\cdot 2^{2L+2} +12 \right)\gamma(\mathcal{E}^\dagger(\rho_{max}))^k \Bigg].
    \end{split}
\end{equation}
\end{widetext}
\end{lemma}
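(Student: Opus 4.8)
The plan is to specialize \cref{aotoc_swap} to $\mathcal{E}\mapsto\mathcal{C}=\mathcal{U}^\dagger\mathcal{E}^{\otimes k}\mathcal{U}$ and then evaluate the Haar average over $U$ by Weingarten calculus. Writing $d=2^L$ and using $\mathcal{C}^{\otimes 2}=(\mathcal{U}^\dagger)^{\otimes 2}\mathcal{E}^{\otimes 2k}\mathcal{U}^{\otimes 2}$ (the $2k$ noisy sites being $k$ in each copy), \cref{aotoc_swap} becomes
\begin{equation}
  G(\mathcal{C})=\frac{1}{d^{2}}\Tr\!\big((d_B S-S_{AA'})\,(\mathcal{U}^\dagger)^{\otimes 2}\,\mathcal{E}^{\otimes 2k}\,\mathcal{U}^{\otimes 2}(S_{AA'})\big).
\end{equation}
Since $\mathcal{U}$ is a unitary channel, $\mathcal{U}^\dagger=\mathcal{U}^{-1}$ and $\Tr(P\,\mathcal{U}^\dagger(R))=\Tr(\mathcal{U}(P)\,R)$ by cyclicity of the trace; writing out $\mathcal{U}^{\otimes 2}(Y)=(U\otimes U)\,Y\,(U\otimes U)^\dagger$, the integrand is a polynomial of degree $4$ in the entries of $U$ and $4$ in those of $\bar U$, which is why a $4$-design suffices. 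Expanding the noise channel in Kraus form $\mathcal{E}=\sum_i K_i(\cdot)K_i^\dagger$ (\cref{naturalrep}) replaces $\mathcal{E}^{\otimes 2k}$ by a sum over a Kraus multi-index of conjugations by tensor products of $2k$ single-qubit Kraus operators, leaving only the two swaps, the unitaries, and the $K_i$'s inside the trace.

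I would then carry out $\mathbb{E}_U$ termwise via Schur--Weyl duality: the superoperator $\mathbb{E}_U[\,U^{\otimes 4}(\cdot)(U^\dagger)^{\otimes 4}\,]$ is a linear combination of the $24$ permutation operators $P_\sigma$ on $(\mathbb{C}^{d})^{\otimes 4}$ with coefficients $\mathrm{Wg}(\sigma^{-1}\tau,d)$ paired against $\Tr(P_\tau\,\cdot\,)$, where $\mathrm{Wg}$ is the unitary Weingarten function. After this substitution $\overline{G(\mathcal{C})}$ is a sum over pairs $(\sigma,\tau)\in S_4\times S_4$ of $\mathrm{Wg}(\sigma^{-1}\tau,2^L)$ times the trace of a product of $P_\sigma$, $P_\tau$, the swaps $S$ and $S_{AA'}$, and the Kraus operators $K_i$ and $K_i^\dagger$, summed over the multi-index.

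The core of the computation is evaluating those traces, which is made tractable by two structural facts. First, $S$ and $S_{AA'}$ are themselves (partial) permutation operators for the factorization $\mathcal{H}_{AB}\otimes\mathcal{H}_{A'B'}$ and the $A/B$ split, so each term reduces to a single trace of a (partial-)permutation-type operator with Kraus insertions. Second --- and this is the source of the $k$-th powers in the formula --- both the permutation operators and $\mathcal{E}^{\otimes 2k}$ factor over qubit tensor factors, so each term splits as (a power of $2$ contributed by the $L-k$ unperturbed qubits and the $A/B$ bookkeeping) $\times$ (a single-qubit quantity)$^{k}$, reflecting that the $k$ identical noisy sites are contracted independently. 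Tracking which loop structure of $(\sigma,\tau)$ relative to the swaps produces which single-qubit factor, one finds that only a short list of low-degree invariants of $X$ can arise --- namely $\Tr X$, $\Tr((\Tr_1 X)^2)$, $\Tr(\Tr_1 X\,\Tr_2 X)$, $\Tr(SX^2)$, $\Tr(X^2)$, $\lVert X\rVert_2^{2}$, $\Tr(\mathcal{E}\otimes\mathbb{I}(X))$, and $\gamma(\mathcal{E}^\dagger(\rho_{max}))$, each a contraction of at most two copies of $X$, entering as $k$-th (or, for $\Tr X$, also $2k$-th) powers; unitality of $\mathcal{E}$, $\sum_i K_iK_i^\dagger=\mathbb{I}$, together with trace-preservation of $\mathcal{E}^\dagger$, $\sum_i K_i^\dagger K_i=\mathbb{I}$, collapse the terms in which a Kraus contraction is left dangling. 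Collecting all contributions, with the common denominator $(2^L-3)(2^L-2)(2^L+1)(2^L+2)(2^L+3)$ supplied by the $S_4$ Weingarten functions and the overall $1/d^2$ prefactor, yields the stated formula.

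The main obstacle is the sheer combinatorial bookkeeping: for each of the $24\times 24$ pairs $(\sigma,\tau)$ one must determine how the composite permutation threads through the $A/B$ bipartition and the $2k$-qubit noise support, simplify a trace of several interleaved permutation-type and Kraus operators, and recognize the result among the invariants above (in particular identifying the mixed objects $\Tr(\mathcal{E}\otimes\mathbb{I}(X))$ and $\gamma(\mathcal{E}^\dagger(\rho_{max}))$ from their Kraus representations), all while keeping the $L$- and $k$-dependence exact. Because of the number of terms and the amount of group-theoretic data involved, this enumeration and the final simplification are carried out with computer algebra; the complete derivation appears in \cref{appendix}.
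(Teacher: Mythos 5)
Your proposal follows essentially the same route as the paper: insert $\mathcal{C}$ into \cref{aotoc_swap}, linearize the two copies of the adjoint action into a single four-fold twirl $\mathbb{E}_U[(U^\dagger)^{\otimes 4}(\,\cdot\,)U^{\otimes 4}]$ (the paper does this explicitly via the identity $\Tr(S(A\otimes B))=\Tr(AB)$), evaluate it by Schur--Weyl/Weingarten calculus over $S_4$, exploit the qubit-wise factorization of the permutation operators and of $\mathcal{E}^{\otimes 2k}$ to reduce each term to $k$-th powers of single-qubit invariants of $X$, and delegate the final combinatorial bookkeeping to computer algebra. The list of invariants, the origin of the Weingarten denominator, and the role of the 4-design all match the paper's appendix, so the proposal is correct and equivalent in substance.
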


From this, we consider the thermodynamic limit in which $L \rightarrow \infty$. The first parenthetical term goes to $-2^{-4k} \cdot 2^{-6L}$. Thus, the only surviving terms are ones that, within the square brackets, have a prefactor of order at least $2^{6L}$. We can easily see that only two terms do: $\left(2^{6L}-2^{4L+2}\right) \Tr(X)^{2k}$ and $- 2^{2k}\left(2^{6L} - 3\cdot 2^{4L+2} +22\cdot 2^{2L} \right)\left\lvert\left\lvert X\right\rvert\right\rvert_2^{2k}$.

Note that all terms that do not survive in the thermodynamic limit decay exponentially, in powers of $2^L$. This means the finite-$L$ expression approaches the below $L \rightarrow \infty$ exponentially quickly, explaining the near indistinguishability of the two in the plots in \cref{numerics}. Keeping only the highest-ordered terms, we arrive at the following result.

\begin{theorem}[$\overline{G(\mathcal{C})}$ for infinite qubit chains]\label{maintheorem} Let $\mathcal{E}$ be a CPTP map. In the limit that $L \rightarrow \infty$, the Haar-averaged symmetric bipartite $A$-OTOC is given by:
\begin{equation}
    \overline{G(\mathcal{C})} = \left|\left|\frac{X}{2}\right|\right|_2^{2k} - \left(\frac{\Tr(X)}{4} \right)^{2k}.
\end{equation}
    
\end{theorem}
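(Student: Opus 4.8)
The plan is to read off \cref{maintheorem} directly from the finite-chain Lemma by straightforward asymptotic bookkeeping in powers of $2^L$; granting the Lemma, no further representation theory is needed. The first step is to isolate the overall prefactor
\[
P_L \;\equiv\; \frac{-\,2^{-4k}\left(2^{-L}-2^{-2L}\right)}{\left(2^L-3\right)\left(2^L-2\right)\left(2^L+1\right)\left(2^L+2\right)\left(2^L+3\right)}
\]
and expand it for large $L$: the numerator is $-2^{-4k}2^{-L}\bigl(1+O(2^{-L})\bigr)$ and the denominator is $2^{5L}\bigl(1+O(2^{-L})\bigr)$, so $P_L = -2^{-4k}\,2^{-6L}\bigl(1+O(2^{-L})\bigr)$.

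Next I would determine the order in $2^L$ of each of the nine summands in the square bracket. The crucial observation is that every $L$-independent factor appearing there --- $\Tr(X)^{2k}$, $\Re\{(\Tr((\Tr_1 X)^2))^k\}$, $\Tr(X)^k$, $(\Tr(\Tr_1 X\,\Tr_2 X))^k$, $\Re\{(\Tr(\mathcal{E}\otimes\mathbb{I}(X)))^k\}$, $\Tr(SX^2)^k$, $\Tr(X^2)^k$, $\|X\|_2^{2k}$, and $\gamma(\mathcal{E}^\dagger(\rho_{max}))^k$ --- is built purely from the natural representation $X$ of the fixed single-qubit map $\mathcal{E}$ (an operator on a fixed four-dimensional Hilbert space) together with the fixed integer $k$, hence each is a finite constant, uniformly bounded in $L$. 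Consequently the growth of every term is controlled entirely by its explicit polynomial coefficient in $2^L$, and inspection shows that exactly two of them carry a $2^{6L}$ piece: the first, with coefficient $2^{6L}-2^{4L+2}$, and the eighth, with coefficient $-2^{2k}\left(2^{6L}-3\cdot 2^{4L+2}+22\cdot 2^{2L}\right)$; all remaining coefficients are $O(2^{4L})$ or smaller.

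Finally I would take the limit. Multiplying $P_L$ against the bracket, the two $2^{6L}$ contributions survive and give $-2^{-4k}\bigl(\Tr(X)^{2k}-2^{2k}\|X\|_2^{2k}\bigr)$, whereas the $O(2^{4L})$ remainder in the bracket contributes $O(2^{-2L})$ and the $O(2^{-L})$ relative correction in $P_L$ contributes $O(2^{-L})$, both of which vanish as $L\to\infty$. Hence $\overline{G(\mathcal{C})}\to 2^{-2k}\|X\|_2^{2k}-2^{-4k}\Tr(X)^{2k}$, and rewriting $2^{-2k}\|X\|_2^{2k}=\|X/2\|_2^{2k}$ and $2^{-4k}\Tr(X)^{2k}=(\Tr(X)/4)^{2k}$ yields the claimed formula.

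The only delicate step is the second one: one must be sure that none of the seven discarded summands can secretly be of order $2^{6L}$, which is guaranteed because each is $O(2^{4L})$ times an $L$-independent bounded constant, leaving no room for cancellations among them to change the order. Everything else is elementary power counting, so I expect the proof to consist of quoting the Lemma and carrying out exactly this three-step asymptotic reduction.
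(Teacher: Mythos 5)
Your proposal is correct and follows essentially the same route as the paper: expand the prefactor as $-2^{-4k}2^{-6L}(1+O(2^{-L}))$, note that only the $\Tr(X)^{2k}$ and $\|X\|_2^{2k}$ terms carry $2^{6L}$ coefficients, and discard the rest. Your version is in fact slightly more careful than the paper's, since you explicitly justify that the $X$-dependent factors are $L$-independent constants (which tacitly requires $k$ fixed as $L\to\infty$) so that no discarded term can compete at order $2^{6L}$.
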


This is a rather striking result. In general, for an arbitrary $\mathcal{E}$, $\overline{G(\mathcal{C})}$ is finite, even when $k$ is finite, i.e. $\frac{k}{L} \rightarrow 0$. An infinitesimal perturbation thereby propagates finite scrambling across an extensively scaling bipartition. One may intuitively expect such a perturbation to be vanishing in the $L \rightarrow \infty$ limit, but this is an instance of a strong butterfly effect.

We can gain intuition about this result by examining the Kraus operators of $\mathcal{C}$. Let $\mathcal{E}^{\otimes k}$ have the Kraus operators $\{K_i^\dagger \}_i$ in the Heisenberg picture. Since $\mathcal{E}^{\otimes k}$ is a $k$-qubit map, any Kraus operator $K_j$ of $\mathcal{E}^{\otimes k}$ has support on at most $k$ qubits. Now consider the evolution of an operator $Q$ under the circuit $\mathcal{C}$. $Q$ evolves to $\sum_i UK_i^\dagger U^\dagger Q U^\dagger K_i U$. We see that $\mathcal{C}$ can be expressed as a CPTP map whose Kraus operators are $\{U^\dagger K_i^\dagger U \}_i$;  in other words, they are the Kraus operators of $\mathcal{E}^{\otimes k}$ conjugated by $U$. Since $U$ is a highly non-local, random unitary, it generally maps operators, even initially $k$-local ones, to other highly non-local ones. The Kraus operators of the full circuit $\mathcal{C}$ tend to have full support on all $L$ qubits. Thus, regardless of $k$, so long as it is greater than zero, $\mathcal{C}$ maps the initially $L/2$-local operators of $\mathcal{A'} \equiv L(\mathcal{H}_A)\otimes \mathbb{I}_B$ (see \cref{aotoc_swap}) to ones with support on all $L$ qubits.


\begin{corollary}[Maximum value for extensively scaling noise]\label{max_corollary}
Let $k = \alpha L$, where $\alpha \in (0, 1]$, and let $L \rightarrow \infty$.
Then the maximum value of $  \overline{G(\mathcal{C})} = 1$ is attained if and only if $\mathcal{E} = \mathcal{V}$, where $V$ is the adjoint action of any \textit{non-identity} unitary $V$.
    
\end{corollary}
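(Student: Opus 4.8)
The plan is to begin from the asymptotic expression of \cref{maintheorem}, $\overline{G(\mathcal{C})} = \left\|X/2\right\|_2^{2k} - \left(\Tr(X)/4\right)^{2k}$, set $a := \tfrac14\|X\|_2^2$ and $b := \tfrac1{16}\Tr(X)^2$, and reduce the statement to: $a^k - b^k \to 1$ as $k = \alpha L \to \infty$ if and only if $\mathcal{E}$ is the adjoint action of a non-identity unitary. Two elementary facts are needed first. Since $\Tr(X) = \sum_i |\Tr K_i|^2 \ge 0$ we have $a,b \ge 0$; and by Cauchy--Schwarz on $\mathcal{L}(\mathbb{C}^2 \otimes \mathbb{C}^2)$, $\Tr(X) = \langle \mathbb{I}, X\rangle \le \|\mathbb{I}\|_2 \|X\|_2 = 2\|X\|_2$, so $0 \le b \le a$.

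The crux is the sharp bound $a \le 1$, i.e. $\|X\|_2^2 \le 4$, with equality precisely for unitary channels. I would prove it through the Gram matrix of the Kraus operators: $\|X\|_2^2 = \Tr(X^\dagger X) = \sum_{i,j}|\Tr(K_i^\dagger K_j)|^2 = \Tr(\mathbb{G}^2)$, where $\mathbb{G}_{ij} := \langle K_i, K_j \rangle \succeq 0$. Trace preservation gives $\Tr(\mathbb{G}) = \sum_i \Tr(K_i^\dagger K_i) = \Tr(\mathbb{I}_2) = 2$, so if $\lambda_i \ge 0$ are the eigenvalues of $\mathbb{G}$ then $\|X\|_2^2 = \sum_i \lambda_i^2 \le \big(\sum_i \lambda_i\big)^2 = 4$, with equality iff $\mathbb{G}$ has rank one, i.e. iff all $K_i$ are proportional to a single operator; trace preservation then forces that operator to be a scalar times a unitary $V$, so $\mathcal{E} = \mathcal{V}$. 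Conversely, for a unitary channel $X = V \otimes V^\dagger$ and $\|X\|_2^2 = \|V\|_2^2\|V^\dagger\|_2^2 = 4$. I expect this inequality to be the main obstacle: it is what pins down the equality case, and the same family of bounds ($\|X\|_2^2 \le 4$, together with $\|X\|_\infty \le 1$ and $\gamma(\cdot)\le 1$ for CPTP maps) is also what I would use to check that every term dropped in passing from the finite-$L$ Lemma to \cref{maintheorem} stays $o(1)$ under the substitution $k = \alpha L$, $\alpha \in (0,1]$, so that the \cref{maintheorem} formula may legitimately be applied here.

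Given $0 \le b \le a \le 1$, we have $\overline{G(\mathcal{C})} = a^k - b^k \le a^k \le 1$, so $1$ is the maximal possible value. For the equality case, send $k = \alpha L \to \infty$: if $a < 1$ then $a^k \to 0$ and $0 \le b^k \le a^k \to 0$, so $\overline{G(\mathcal{C})} \to 0$; hence $\overline{G(\mathcal{C})} \to 1$ forces $a = 1$, i.e. $\mathcal{E} = \mathcal{V}$ for some $V \in U(2)$. Then $X = V \otimes V^\dagger$, $a = 1$, and $b = |\Tr V|^4/16$; writing the eigenvalues of $V$ as $e^{i\theta}, e^{i\phi}$, $|\Tr V|^2 = 2 + 2\cos(\theta-\phi) \le 4$ with equality iff $\theta = \phi$, i.e. iff $V$ is a global phase ($\mathcal{V} = \mathrm{id}$). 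So if $\mathcal{V} \ne \mathrm{id}$ then $b < 1$, hence $b^k \to 0$ and $\overline{G(\mathcal{C})} = 1 - b^k \to 1$; and if $\mathcal{V} = \mathrm{id}$ then $b = 1$ and $\overline{G(\mathcal{C})} = 0$. Combining with the first half, $\overline{G(\mathcal{C})}$ attains its maximal value $1$ (in the $L \to \infty$ limit) exactly when $\mathcal{E}$ is the adjoint action of a non-identity unitary, which is the claim.
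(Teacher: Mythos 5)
Your proof is correct, and it reaches the same conclusion as the paper's but through a genuinely different route for the key inequality. The paper bounds $\|X/2\|_2$ by applying the triangle inequality directly to $\sum_i K_i\otimes K_i^\dagger$ and characterizes saturation via linear dependence of the Kraus operators; you instead identify $\|X\|_2^2$ with $\Tr(\mathbb{G}^2)$ for the Kraus Gram matrix $\mathbb{G}\succeq 0$ with $\Tr(\mathbb{G})=2$, so that $\|X\|_2^2\le 4$ follows from $\sum_i\lambda_i^2\le(\sum_i\lambda_i)^2$ with equality exactly at rank one. The two arguments are equivalent in content, but yours buys two things the paper leaves implicit: (i) the chain $0\le b\le a\le 1$ (via $\Tr(X)=\langle\mathbb{I},X\rangle\le 2\|X\|_2$) shows directly that $a^k-b^k\le a^k\le 1$, i.e.\ that $1$ really is an upper bound for the combined expression, whereas the paper bounds the two terms separately; and (ii) your remark that one must check the terms dropped in passing from the finite-$L$ lemma to \cref{maintheorem} remain $o(1)$ under $k=\alpha L$ is a legitimate point the paper glosses over — the theorem's derivation tacitly treats the $X$-dependent factors as bounded, and for $k\propto L$ this needs the bounds $\|X\|_2^2\le 4$, $\gamma\le 1$, etc., exactly as you say (carrying this out, all subleading terms are suppressed by at least $2^{-L}$). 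Your handling of the identity-channel exclusion via the eigenvalues $e^{i\theta},e^{i\phi}$ of $V$ is also cleaner than the paper's Cauchy--Schwarz argument, though both are valid. One cosmetic note: the paper's displayed triangle-inequality step misplaces the exponent ($\sum_i\|\cdot\|_2^{2k}$ rather than $(\sum_i\|\cdot\|_2)^{2k}$); your version avoids that slip.
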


\begin{proof}
     By the triangle inequality,
    \begin{equation}
        \begin{split}
            \left|\left|\frac{X}{2}\right|\right|_2^{2k} &= \left|\left|\frac{1}{2}\sum_i K_i \otimes K_i^\dagger \right|\right|_2^{2k} \leq \sum_i \left|\left|\frac{1}{2} K_i \otimes K_i^\dagger \right|\right|_2^{2k}
            \\
            & = \left(\frac{1}{2}\sum_i\left|\left| K_i\right|\right|^2\right)^{2k} = 
            \left(\frac{1}{2}\sum_i \Tr(K_i^\dagger K_i)\right)^{2k} \\
            &= \left(\frac{1}{2}\Tr(\mathbb{I})\right)^{2k} = 1,
        \end{split}
    \end{equation}
    where the second-to-last equality follows from the completeness of the Kraus operators. 

    We know that the triangle inequality is saturated if and only if all Kraus operators are linearly dependent. Let us assume they are all proportional to some operator $K$, such that $K_i \equiv \alpha_i K$. In this case, $\sum_i K_i^\dagger K_i = c K^\dagger K = \mathbb{I}$, where $c = \sum_i \left|\alpha_i\right|^2$. Then $\sqrt{c}K \equiv V$ satisfies $V^\dagger V = \mathbb{I}$ and is therefore unitary. So for non-unitary maps, $\left(\left|\left|\frac{X}{2} \right|\right|\right)^{2k}$ is less than $1$ and goes to $0$ as $k \rightarrow \infty$. For a unitary map, $\left(\left|\left|\frac{X}{2} \right|\right|\right)^{2k}$ simply equals $\left(\frac{1}{2}\Tr(U^\dagger U)\right)^{2k} = 1$, saturating the inequality.

    We can rewrite $\left(\frac{\Tr(X)}{4}\right)^{2k}$ as $\left(\frac{1}{4}\right)^{2k}\cdot\left(\sum_i \left|\langle K_i, \mathbb{I} \rangle \right|^2\right)^{2k}$. By the Cauchy-Schwarz inequality, we see that
    \begin{equation}
        \left(\sum_i \left|\langle K_i, \mathbb{I} \rangle \right|^2\right)^{2k} \leq \left(\sum_i \left|\left| K_i \right|\right|^2 d\right)^{2k} = \left(d^2\right)^{2k} = 4^{2k},
    \end{equation}
    where the penultimate equality follows again from completeness of the Kraus operators. Since the Cauchy-Schwarz inequality is saturated if and only if the two operators are linearly dependent, \textit{all} non-identity maps have $\frac{\Tr(X)}{4} < \frac{4}{4} = 1$, so the second term goes to $0$ as $k \rightarrow \infty$.

     Thus, any non-trivial unitary satisfies  $\overline{G(\mathcal{C})} = 1$ as $k \rightarrow \infty$, since the first term equals $1$ and the second term goes to $0$ \footnote{The identity map clearly gives $1$ for both terms, so that the $\mathcal{A}$-OTOC is always $0$.}. Any non-unitary map has a first term that goes to $0$ and a second term that goes to $0$ as $k = \alpha L \rightarrow \infty$.
     
\end{proof}

Via this proof, we have thus also proved the following:

\begin{corollary}[$\mathcal{A}$-OTOC for extensively scaling non-unitary noise]\label{corollary_dissipation}
Let $k = \alpha L$, where $\alpha \in \{0, 1\}$, and let $L \rightarrow \infty$. Let $\mathcal{E}$ be a non-unitary CPTP map. Then $\overline{G(\mathcal{C})} = 0$.
\end{corollary}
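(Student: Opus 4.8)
The plan is to read this corollary off directly from \cref{maintheorem} together with the two norm estimates already isolated in the proof of \cref{max_corollary}. By \cref{maintheorem}, in the thermodynamic limit $\overline{G(\mathcal{C})} = \lVert X/2\rVert_2^{2k} - \left(\Tr(X)/4\right)^{2k}$, where $X$ is the single-qubit natural representation of $\mathcal{E}$. Both summands are nonnegative, so it suffices to show that each of them tends to $0$ as $k=\alpha L\to\infty$; the hypothesis $\alpha\neq 0$ enters only here, to guarantee $k\to\infty$ (the degenerate case $k=0$ is trivial, since then $\mathcal{C}$ is the identity channel).

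First I would revisit the triangle-inequality bound $\lVert X/2\rVert_2 \le \tfrac12\sum_i\lVert K_i\rVert_2^2 = 1$ for Kraus operators $\{K_i\}$ of $\mathcal{E}$, the last equality being Kraus completeness $\sum_i K_i^\dagger K_i=\mathbb{I}$. The crucial ingredient is the equality condition: the triangle inequality saturates exactly when the $K_i$ are mutually proportional, and — as argued in the proof of \cref{max_corollary} — proportionality of the Kraus operators of a CPTP map forces that map to be the adjoint action of a unitary. Hence for non-unitary $\mathcal{E}$ the bound is \emph{strict}, $\lVert X/2\rVert_2^2 = 1-\delta$ for some $\delta>0$, so $\lVert X/2\rVert_2^{2k} = (1-\delta)^k\to 0$.

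For the second term I would reuse the Cauchy--Schwarz estimate from the same proof, $\Tr(X)=\sum_i\lvert\langle K_i,\mathbb{I}\rangle\rvert^2 \le \lVert\mathbb{I}\rVert_2^2\sum_i\lVert K_i\rVert_2^2 = 4$, with equality iff $\mathcal{E}$ is the identity map. Any non-unitary $\mathcal{E}$ is in particular non-identity, so $\Tr(X)/4<1$ strictly and $\left(\Tr(X)/4\right)^{2k}\to 0$. Combining the two limits yields $\overline{G(\mathcal{C})}\to 0$, as claimed.

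I do not expect a genuine obstacle here, since the corollary is essentially a by-product of the analysis already performed for \cref{max_corollary}; the one substantive point is the implication ``all Kraus operators mutually proportional $\Rightarrow$ the channel is conjugation by a unitary,'' which is precisely what upgrades both norm inequalities from $\le$ to $<$ and thereby makes the two geometric sequences decay. The only subtlety worth flagging is the status of the identity map, which saturates both bounds but is itself unitary and hence already excluded by hypothesis, so it poses no threat to the claimed value $0$.
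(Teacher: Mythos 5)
Your proposal is correct and follows essentially the same route as the paper, which derives this corollary as a direct by-product of the proof of \cref{max_corollary}: non-unitarity makes the triangle-inequality bound $\lVert X/2\rVert_2<1$ strict, non-identity makes the Cauchy--Schwarz bound $\Tr(X)/4<1$ strict, and both geometric terms in \cref{maintheorem} decay to zero as $k=\alpha L\to\infty$. Your explicit handling of the degenerate $\alpha=0$ case and of the identity map is a minor tidying of the same argument, not a different approach.
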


We emphasize that, for open systems, $G(\mathcal{C}) = 0$ does not imply that no scrambling has occurred. Of course, if $\mathcal{C}$ contains no noise, then the circuit in fact acts as the identity and does not scramble any information. However, $G(\mathcal{C})$ is a measure of both scrambling \textit{and} decoherence. The other scenario in which $G(\mathcal{C}) = 0$ is when there is sufficient decoherent noise such that all of the information scrambled from $A$ to $B$ has dissipated out of the full system. This is the underlying reason for the result in \cref{corollary_dissipation}. 

\section{Examples}\label{examples}

\subsection{Unitary noise (Bloch sphere rotation)}

Any single-qubit unitary can be represented by a rotation of the Bloch sphere by an angle $\theta$ about some axis $\hat{n}$. Let $V$ be such a unitary, and let $\mathcal{E} \equiv \mathcal{V}$ act on $k$ qubits. Then
\begin{equation}
    \overline{G(\mathcal{C})} = 1 - \cos^{4k}(\theta). \label{eq:unitary_k}
\end{equation}
\begin{proof}
    We can represent $V \equiv e^{i\theta\hat{n}\cdot\vec{\sigma}}$. Then $X = e^{i\theta\hat{n}\cdot\vec{\sigma}} \otimes e^{-i\theta\hat{n}\cdot\vec{\sigma}}$. Since we have only one Kraus operator here, the above triangle inequality is an equality, and $\left|\left|\frac{X}{2}\right|\right|_2 = \frac{1}{2}\left|\left| V\right|\right|^2 = \frac{1}{2}\Tr(V^\dagger V) = \frac{1}{2}\Tr(\mathbb{I}) = 1$.

    Now, for the second term, $\Tr(X) = \Tr(V)\cdot\Tr(V^\dagger).$ We know that $e^{\pm i\theta\hat{n}\cdot\vec{\sigma}} = \cos(\theta)\mathbb{I} \pm  i\sin(\theta)\hat{n}\cdot\vec{\sigma}$. Since $\hat{n}\cdot\vec{\sigma}$ is traceless, then $\Tr(X) = \cos^2(\theta)\Tr(\mathbb{I})^2 = 4\cos^2(\theta)$. So $\left(\frac{\Tr(X)}{4} \right)^{2k} = \cos^{4k}(\theta)$, and we get the above equation.
\end{proof}

\begin{figure}
    \centering
    \includegraphics[width=0.6\linewidth]{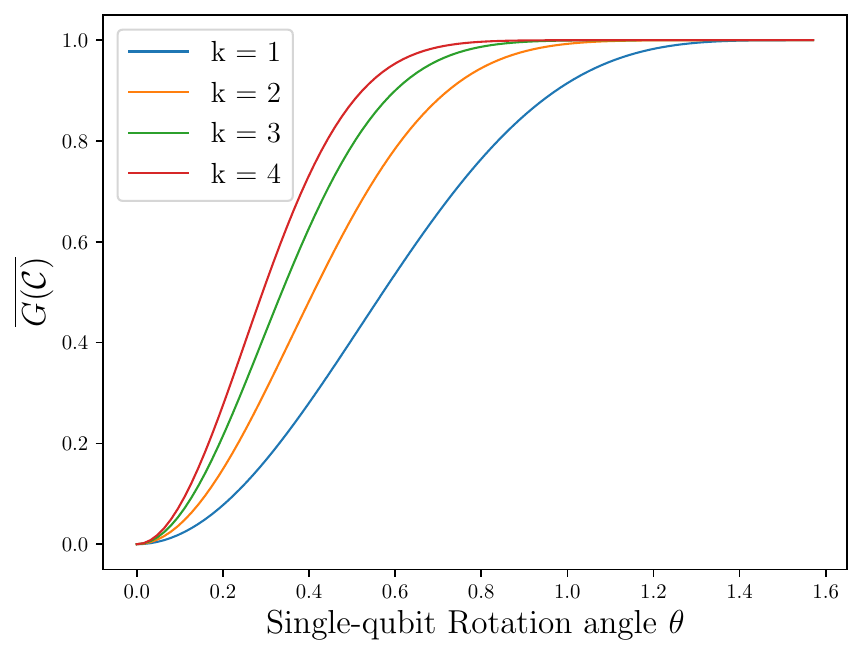}
    \caption{$\overline{G(\mathcal{C})}$ vs. $\theta$ for small values of $k$. As $k$ increases, $\overline{G(\mathcal{C})}$ approaches the step function $H(\theta)$.}
    \label{unitary_k_plot}
\end{figure}

The function in \cref{eq:unitary_k} approaches $1$ for increasing $\theta$ more quickly as $k$ increases. We clearly see that the unitary corollary holds. If $k = \alpha L \rightarrow \infty$ and $V$ is non-trivial, then the second term goes to $0$ and $\overline{G(\mathcal{C})} \rightarrow 1$. Formally, for any $k \propto L$, there is a first-order transition, in the thermodynamic limit, from a non-scrambling phase to a maximally scrambled point at $\theta = 0$. 

Notice that for any (non-zero) value of $k$, we can achieve any value of $\overline{G(\mathcal{C})}$ by varying $\theta$. This is a rather surprising result that warrants emphasis. By applying a $\pi/2$ rotation to a \textit{single} qubit, we can achieve \textit{maximal} scrambling between the two \textit{infinitely large} halves of our bipartition, displaying a quintessential example of the butterfly effect.

\subsection{Depolarization}

Consider a channel $\mathcal{E}$ that fully depolarizes with probability $p$ and acts as the identity with probability $1-p$, such that, for an operator $O$, $\mathcal{E}(O) = p\mathbb{I} + (1-p)O$. Then $\mathcal{E}$ has the Kraus decomposition $\left\{\sqrt{1-\frac{3p}{4}}\mathbb{I}, \frac{\sqrt{p}}{2}\sigma^x,\frac{\sqrt{p}}{2}\sigma^y, \frac{\sqrt{p}}{2}\sigma^z \right\}$. Then \begin{equation}
    \overline{G(\mathcal{C})} = \left( 1 - \frac{3p}{2} + \frac{3p^2}{4}\right)^k - \left(1-\frac{3p}{4}\right)^{2k}.
\end{equation}

\begin{proof}
    It follows that
    \begin{equation}
        X = \left(1-\frac{3p}{4}\right)\mathbb{I}\otimes\mathbb{I} + \frac{p}{4}\vec{\sigma}\cdot\vec{\sigma}.
    \end{equation}
    To calculate $\left|\left|X\right|\right|^2$, we use the basic properties that the Paulis are traceless and square to identity. This gives:
    \begin{equation}
        \left|\left|X\right|\right|^2 = 4\left( \left(1-\frac{3p}{4}\right)^2 + 3\left(\frac{p}{4}\right)^2 \right) = 4 - 6p + 3p^2.
    \end{equation}

    And we can directly calculate:
    \begin{equation}
        \frac{\Tr(X)}{4} = \frac{1}{4}\cdot 4 \cdot \left(1-\frac{3p}{4}\right) = \left(1-\frac{3p}{4}\right).
    \end{equation}

\end{proof}

\begin{figure}
    \centering
    \includegraphics[width=0.9\linewidth]{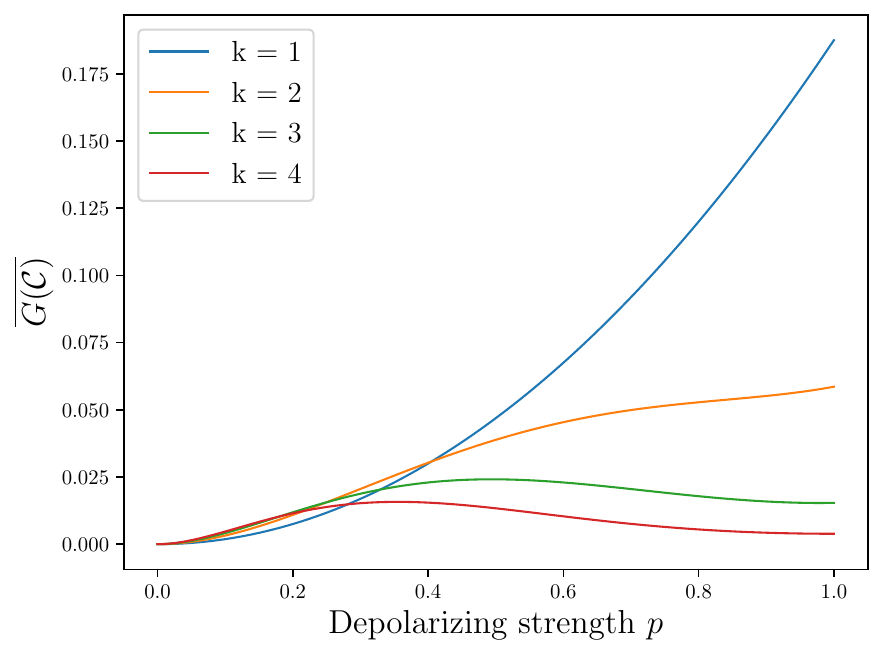}
    \caption{$\overline{G(\mathcal{C})}$ vs. $p$ for small values of $k$. As $k$ increases, $\overline{G(\mathcal{C})}$ flattens and eventually goes to $0$.}
    \label{depolarizing_k_plot}
\end{figure}

Again, we see the second corollary follows clearly from this formula. If $p > 0$, both terms have arguments less than $1$ in the exponential, and both go to $0$ as $k \rightarrow \infty$. This can be understood in terms of the competition between decoherence and scrambling. The vanishing of the first term implies maximal decoherence in terms of the bipartite degrees of freedom, leading to a vanishing bipartite $\mathcal{A}$-OTOC. All the information that would otherwise be scrambled across the bipartition has now leaked out of the system.

With depolarization, there are no values of $k$ for which we can achieve maximal scrambling. However, for any given finite $k$, $\overline{G(\mathcal{C})}$ is finite for all $p > 0$. Finite depolarization thus still produces macroscopic scrambling in infinite qubit chains. Moreover, we see that the behavior of $\overline{G(\mathcal{C})}$ with respect to $p$ varies significantly with $k$. For $k = 1$ and $k = 2$, $\overline{G(\mathcal{C})}$ varies monotonically with $p$. For $k \geq 3$, $\overline{G(\mathcal{C})}$ is non-monotonic in $p$, with a peak closer to $p=0$ as $k$ increases. Evidently, for all $k \geq 3$, there exists a critical noise strength $p^*(k)$ above which adding more noise increases decoherence more than it increases scrambling. The parameters $k$ and $p$ both control the strength of decoherence and of the perturbation that leads to scrambling. Since these two effects contribute oppositely in the bipartite $\mathcal{A}$-OTOC, one can generally expect the existence of an optimal combination of scrambling and decoherence. Consequently, as $k$ increases, the maximum occurs at a smaller $p^*(k)$, namely weaker perturbation and decoherence per qubit. The observation that, specifically for $k<3$, even maximum depolarization is insufficient to produce decoherence that surpasses scrambling lacks an obvious explanation. 

\subsection{Numerics}\label{numerics}

It is natural to wonder how general the phenomena outlined in this paper are. Namely, do unitaries composed of local random ``gates" display the butterfly effect? Can the Haar-randomness requirement be relaxed, and if so, what gate features correlate with the effect? We attempt to address these questions with numerical simulations. To this end, we replace the Haar-random $U$ with brickwork unitaries composed of $2L$ alternating layers of consecutive two-qubit gates, whose type we vary by circuit. We set $L = 8$ qubits.

We simulate two types of brickwork circuits. They are \textbf{1)} a circuit composed of Haar-random gates \textbf{2)} a circuit composed of the same, repeated dual-unitary gate \cite{dualunitary,EP}. Brickwork Haar-random circuits are known to be approximate k-designs at sufficient depth \cite{BrickworkTDesign}. With this knowledge and the fact that this model's bipartite $\mathcal{A}$-OTOC average requires a 4-design (see \cref{4designequation}), we know our simulated circuits are quite far from such a threshold, allowing us to observe a circuit that does not a priori approximate a Haar-random one. Repeated dual-unitary brickwork circuits are useful since their entangling properties are well understood. Two-qubit dual-unitary gates are parametrized by a value that corresponds simply to their operator space entangling power \cite{EP}. Operator space entangling power is a metric of the average operator entanglement \cite{zanardiEntanglementQuantumEvolutions2001} generated by a unitary acting on initially unentangled, random product operators with respect to a fixed bipartition. For two-qubit gates, we take the fixed bipartition to be the ``natural" one between the two qubits. By varying this parameter, we simulate dual-unitary circuits composed of varying operator space entangling power gates. Since scrambling is intimately connected with operator entanglement, studying the closeness of the dual-unitary circuits to the Haar-random result can shed insight into the extent to which operator entanglement generation is responsible for the butterfly effect in this model.


We calculate the bipartite $\mathcal{A}$-OTOC using \cref{aotoc_state}, implementing simulations via IBM's Qiskit package. To generate plots like the ones above, we sample 10 parameter ($\theta, \ p$, respectively) values for each example. For each value, we generate 5 random brickwork circuits. Finally, for each circuit, we find $G(\mathcal{C})$ for 5 different initial $\ket{\psi}$. From these, we calculate $\overline{G(\mathcal{C})}$ for each parameter value and generate plots. Each plot also contains the $L \rightarrow \infty$ analytic curve and the $L = 8$ curve.


For unitary noise, we apply a rotation about the $Z$ axis on a single qubit. Since our brickwork unitaries have no notion of qubit ``orientation," our results hold without loss of generality for any single-qubit unitary. We vary the rotation angle between $0$ and $\frac{\pi}{2}$. For depolarizing noise, we vary $p$ between $0$ and $1$ and apply the corresponding depolarization channel to a single qubit. 

\begin{figure}
    \centering
    \begin{subfigure}[t]{0.45\textwidth}
        \centering
        \adjustbox{valign=t}{\includegraphics[height=5cm]{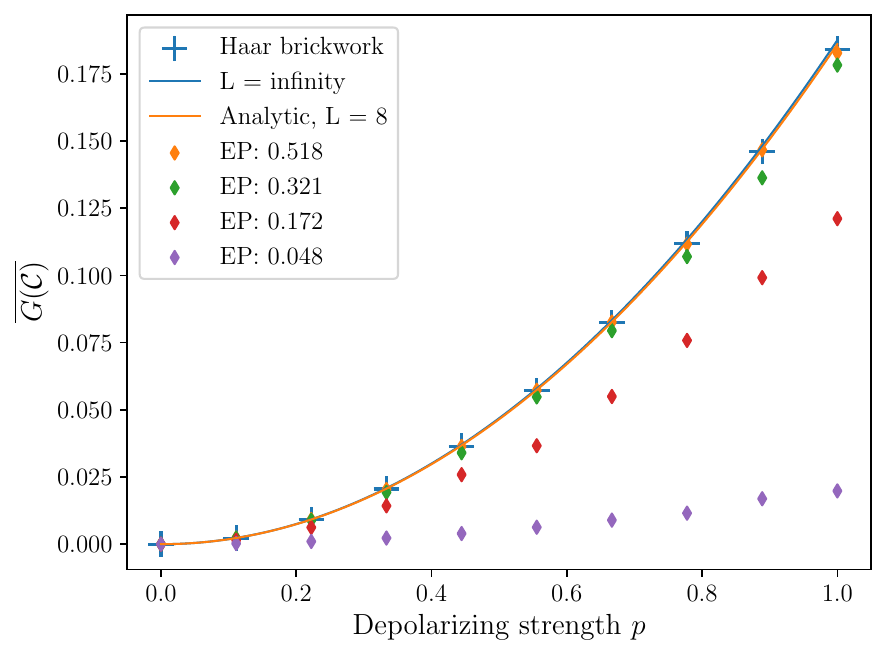}}
        
    \label{fig:depolarizing_noise_plot_2}
    \end{subfigure}
    \hfill
    \begin{subfigure}[t]{0.45\textwidth}
        \centering
        \adjustbox{valign=t}{\includegraphics[height=5cm]{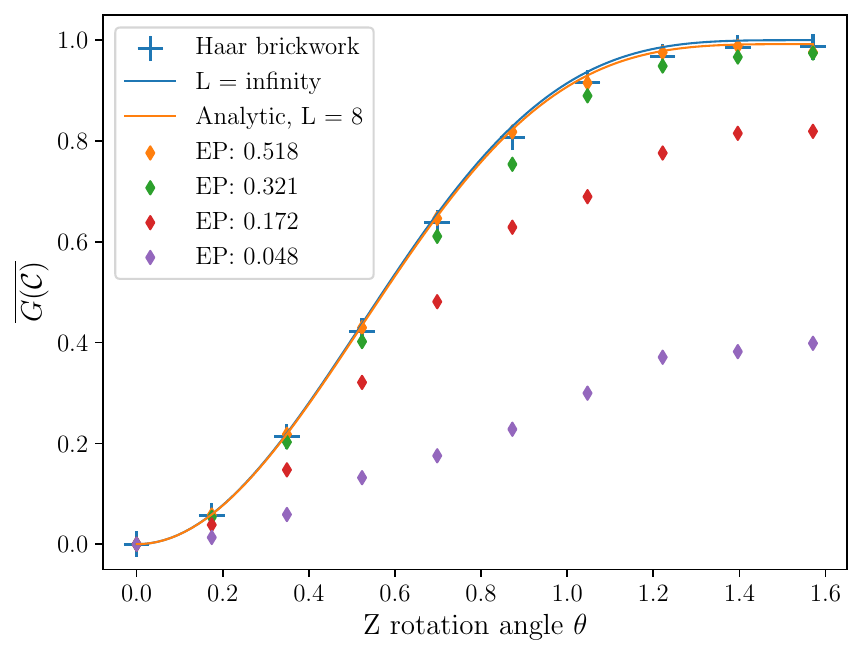}}
        \label{fig:unitary_noise_plot}
    \end{subfigure}

\caption{Depolarizing noise strength and unitary rotation angle vs. $\overline{G(\mathcal{C})}$. The ``EP" curves refer to repeated dual-unitary brickwork circuits with corresponding operator entangling power. The Haar-random circuits match the analytic results very closely. Increasing $E_p$ in dual-unitary brickwork circuits increases closeness to analytic results. The standard deviations for the EP curves under unitary noise are substantial enough for $\theta$ near $\pi/2$ that the non-monotonicity cannot be deemed significant. We omit the error bars for clarity.}
\end{figure}

The Haar-random match nearly identically with the analytic formulae. Interestingly, the high (but not maximal) operator space entangling power dual-unitary circuits also match nearly identically, indicating that the derived formulae may be approximated by a wider class of unitaries than just Haar-random. As the operator space entangling power of the dual-unitary gate decreases, so too does the bipartite $\mathcal{A}$-OTOC value. This correlates with a smaller delocalization of the perturbation from the initial qubit where it was applied, as the generation of operator entanglement decreases.

\begin{figure}
    \centering
    \begin{subfigure}[t]{0.45\textwidth}
        \centering
        \adjustbox{valign=t}{\includegraphics[height=5cm]{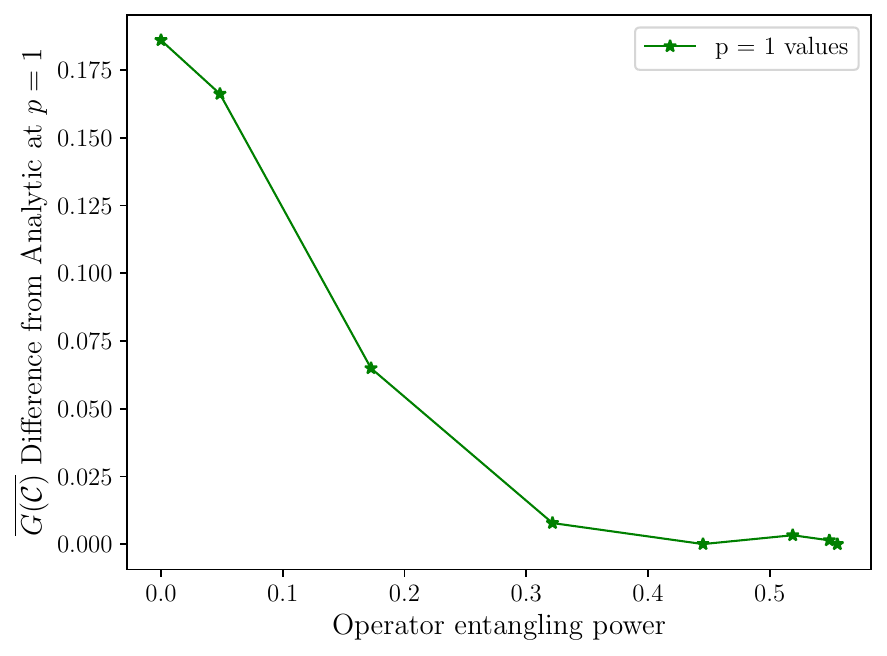}}
    \label{fig:depolarizing_deviation_plot}
    \end{subfigure}
    \hfill
    \begin{subfigure}[t]{0.45\textwidth}
        \centering
        \adjustbox{valign=t}{\includegraphics[height=5cm]{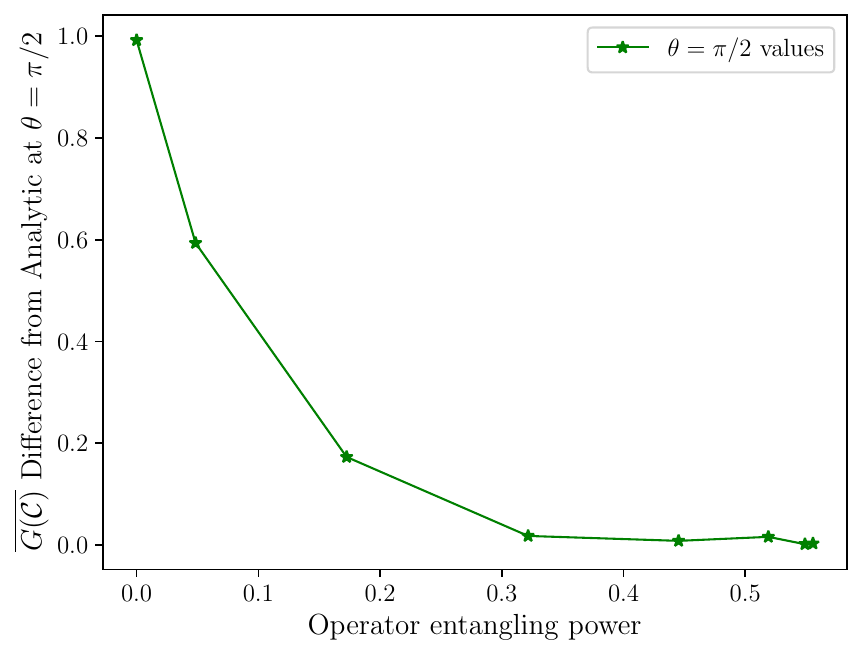}}
        \label{fig:unitary_deviation_plot}
    \end{subfigure}

\caption{Plots of deviation of dual-unitary brickworks from analytic results at argmax noise parameter values, for depolarizing and unitary noise respectively. Both curves seem to reflect a power law scaling of this difference with respect to $E_p$.}
\label{deviation_plot}
\end{figure}

We quantify this further in \cref{deviation_plot}. For both noise channels, we calculate $\overline{G(\mathcal{C})}-\overline{G(\mathcal{C}_{\text{DU}})}$ at the parameter values that maximize the analytic result ($p=1$ and $\theta = \pi/2$, respectively). This difference is plotted with respect to $E_p$. While a precise threshold for $E_p$ above which the typical behavior is well approximated may be dependent on the system size and the details of the circuit, \cref{deviation_plot} indicates that the operator space entangling power of the generating gate is one of the factors that contribute to the observation of the butterfly effect, even for classes of unitaries that do not form 4-designs.

\section{Numerics of Local Hamiltonian Circuits}

It is natural to question the extent to which this butterfly-like phenomenon generalizes beyond Haar-random unitaries. Haar-random unitaries require circuit depths extensive in qubit number to implement \cite{Schuster:2024ajb}, and as such do not correspond to readily preparable many-body dynamics for sufficiently large systems. Furthermore, an efficiently constructed, exact unitary 4-design remains elusive and approximate 4-designs themselves require extensively deep circuits \cite{Brandao:2012zoj}. It is thus conceivable that local many-body Hamiltonian dynamics fail to reproduce the effects in this paper, leaving our results merely abstract but unlikely to be observed.

To test this, we simulate noisy encoding-decoding circuits whose encodings are generated by a range of 1D spin chain Hamiltonians with periodic boundary conditions acting for different times $t$. We simulate XYZ Hamiltonians with transverse fields, given by
\begin{equation}
  \begin{aligned}
    H \;=\; -\sum_{j=1}^{L} \Bigl(&
        J_x\,\sigma_j^x \sigma_{j+1}^x
      + J_y\,\sigma_j^y \sigma_{j+1}^y
      + J_z\,\sigma_j^z \sigma_{j+1}^z \\[2pt]
      &+ h\,\sigma_j^z
    \Bigr),
  \end{aligned}
\end{equation}

as well as transverse-field Ising model (TFIM) Hamiltonians of the form

\begin{equation}
    H = -\sum_{j=1}^{L} \left( 
    J \sigma_j^x \sigma_{j+1}^x +
    h \sigma_j^z \right).
\end{equation}

The former is chaotic for general parameter choices \cite{xyz_chaotic_1, xyz_chaotic_2}, while the latter is always integrable via a mapping onto free fermions \cite{isingchainbeginners}. For each model, we simulate unitaries generated by $10$ different times $t$ under which $H$ is applied, interpolated between $0$ and $t_{max} \equiv \pi/\max(J_x, J_y, J_z, h)$, for both unitary and depolarizing noise on a single qubit in an $L = 8$ qubit circuit.

For the chaotic XYZ model presented here, we use $J_x = 1, J_y = 0.4, J_z = 0.6, h = 0.5$. For the TFIM, we use $J = 1, \ h = 0.8$.

\begin{figure}
    \centering
    \includegraphics[width=0.9\linewidth]{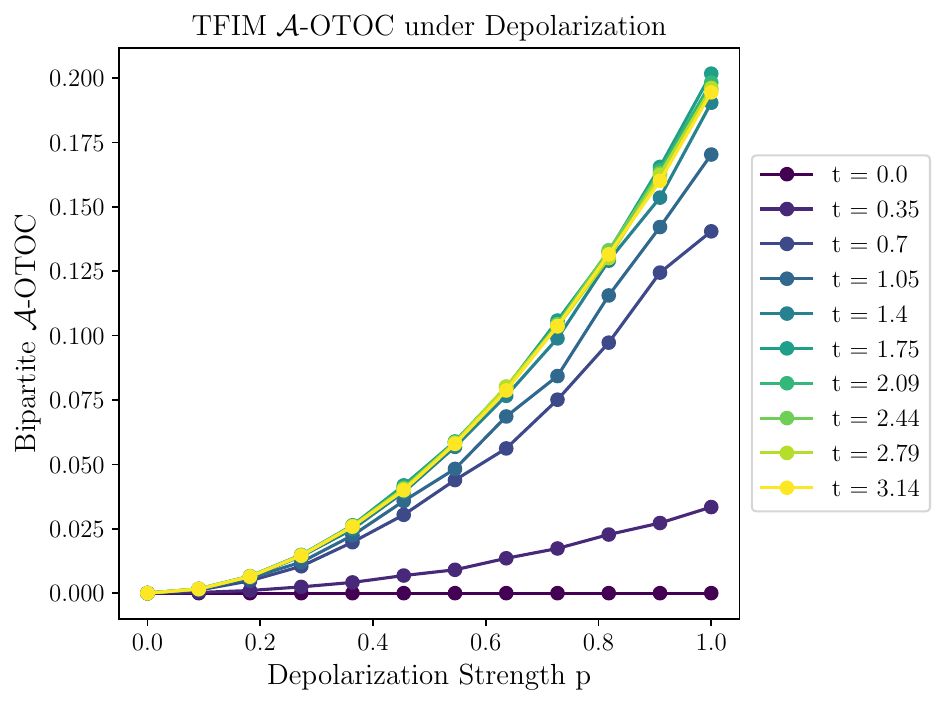}
    \caption{$G(\mathcal{C})$ vs. $p$ for $k = 1, \ L = 8$. Times $t$ range from $0$ to $\pi$. As $t$ increases, the curves approach the Haar-average result.}
    \label{tfim_manycurves}
\end{figure}

The plot in \cref{tfim_manycurves} shows that, even in a local integrable model, the bipartite $\mathcal{A}$-OTOC appears to approach Haar-random behavior as $t$ approaches a ``characteristic" timescale of the Hamiltonian.

\begin{figure}
    \centering
    \begin{subfigure}[t]{0.45\textwidth}
        \centering
        \adjustbox{valign=t}{\includegraphics[height=5cm]{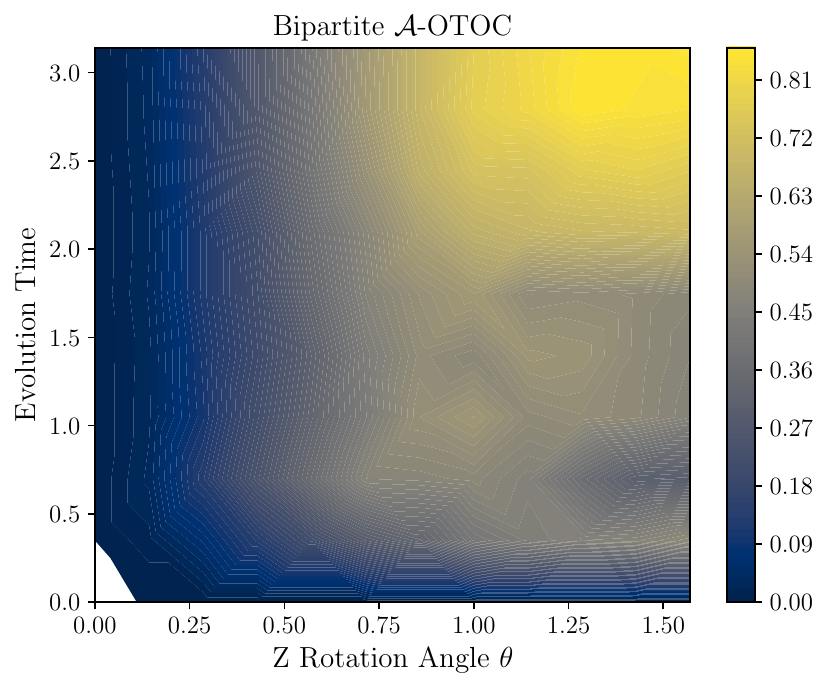}}
        
    \label{fig:tfim_ham_uni}
    \end{subfigure}
    \hfill
    \begin{subfigure}[t]{0.45\textwidth}
        \centering
        \adjustbox{valign=t}{\includegraphics[height=5cm]{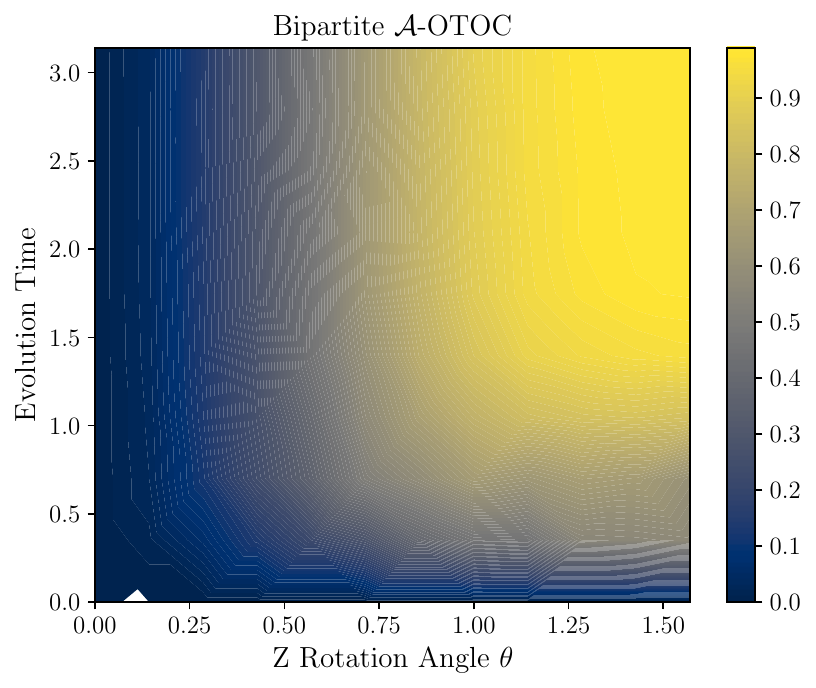}}
        \label{fig:heis_ham_uni}
    \end{subfigure}

\caption{Bipartite $\mathcal{A}$-OTOC vs. evolution time $t$ and unitary noise angle $\theta$ for \textbf{above)} TFIM Hamiltonian with $J = 1, \ h=0.8$, and \textbf{below)} chaotic XYZ model with $J_x = 1, J_y = 0.4, J_z = 0.6, h = 0.5$.}
\label{unizz}
\end{figure}

In \cref{unizz}, we see that the chaotic XYZ circuits display greater scrambling at earlier times and also achieve higher absolute values of the bipartite $\mathcal{A}$-OTOC than TFIM circuits. This is expected given that chaotic Hamiltonians tend to scramble information more rapidly than integrable ones \cite{otoc2}. However, both models give behavior at longer times that strongly resembles the Haar results. Similar behavior is displayed for depolarizing noise in \cref{depolzz}. These simulations indicate that the Haar-average behavior captures a butterfly effect that may hold for local Hamiltonian systems as well. 

 \begin{figure}[!ht]
    \centering
    \begin{subfigure}[t]{0.45\textwidth}
        \centering
        \adjustbox{valign=t}{\includegraphics[height=5cm]{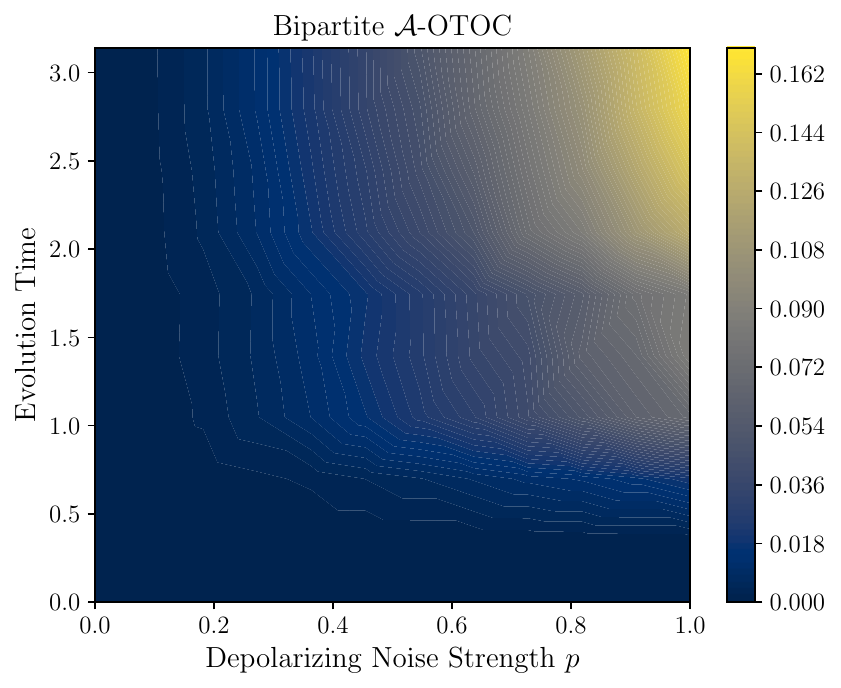}}
        
    \label{fig:tfim_ham_dep}
    \end{subfigure}
    \hfill
    \begin{subfigure}[t]{0.45\textwidth}
        \centering
        \adjustbox{valign=t}{\includegraphics[height=5cm]{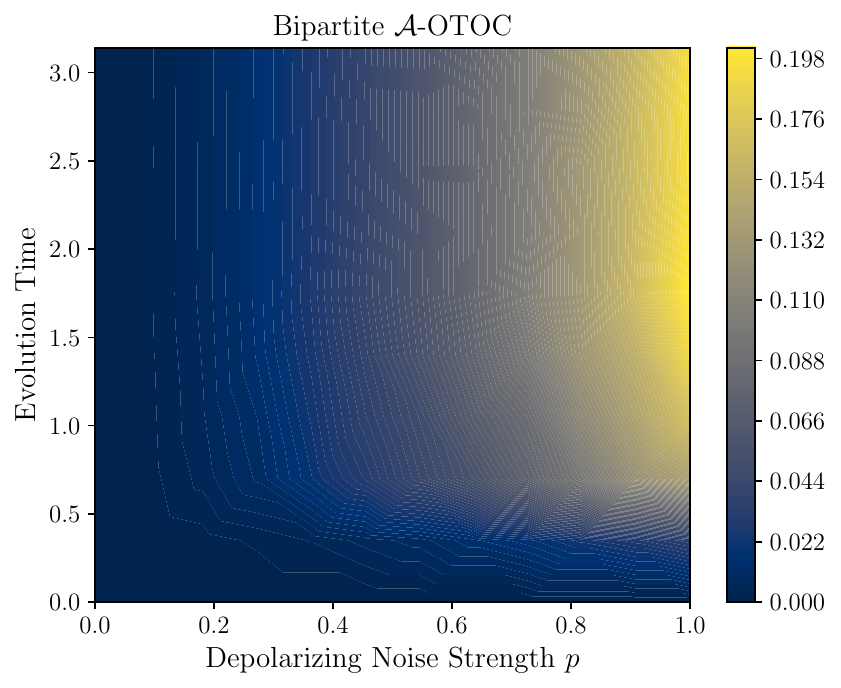}}
        \label{fig:heis_ham_dep}
    \end{subfigure}

\caption{Bipartite $\mathcal{A}$-OTOC vs. evolution time $t$ and depolarizing noise strength $p$ for \textbf{above)} TFIM Hamiltonian with $J = 1, \ h=0.8$, and \textbf{below)} chaotic XYZ model with $J_x = 1, J_y = 0.4, J_z = 0.6, h = 0.5$.}
\label{depolzz}
\end{figure}

\section{Conclusion}\label{conclusion}

We have studied a simple model circuit consisting of a Haar-random unitary $U$, identical noise $\mathcal{E}$ on $k$ qubits, and $U^\dagger$ to explore the interconnection between scrambling, noise, and chaos. We have derived an analytic formula for the bipartite $\mathcal{A}$-OTOC averaged over Haar-random $U$, which we employ as our scrambling metric. Remarkably, our formula drastically simplifies into an elegant expression when the system size $L \rightarrow \infty$.

From this formula, it immediately becomes clear that the bipartite $\mathcal{A}$-OTOC, which tracks scrambling between two macroscopic subsystems, remains finite even when noise is applied only to a finite subset of qubits. This is a rather surprising instance of a butterfly effect, in which noise on as little as one qubit triggers macroscopically observable scrambling in an infinite system. We also derive formulae for the bipartite $\mathcal{A}$-OTOC in circuits where $k \propto L$, highlighting a fundamental distinction between dissipative and unitary noise.

To better elucidate these results, we explicitly work through the examples of depolarizing and unitary ($R_z$) noise. In both cases, the bipartite $\mathcal{A}$-OTOC vs. noise strength curves display a strong $k$-dependence. For depolarizing noise, the functional shape of such curves varies sharply with $k$, providing another example of a butterfly-esque phenomenon -- the infinite system can somehow ``distinguish" between noise on one versus two qubits.

To help understand the generality of these phenomena, we have conducted numerical simulations on $8$-qubit circuits for both depolarizing and unitary noise, over multiple unitary types. Haar-random brickwork circuits fit very closely to the analytic Haar-random formulae. Brickwork circuits composed of highly operator space entangling dual-unitary gates closely fit the formulae as well, while those with weakly entangling dual-unitaries fall noticeably below. This preliminarily suggests that our results, and the existence of the observed butterfly effect, transcend the Haar-random unitary case. It also provides explicit evidence of the connection between gate entangling properties and circuit-wide scrambling.

The work here suggests several avenues for further theoretical development. First, the relationship between unitary type and the scrambling properties in this model is one that can be explored much further. Averages over Haar-random unitaries are convenient for their analytic tractability, but these are by no means representative of all physical processes. It would be interesting to investigate, with finite-size scaling methods, whether the numerical data we observe in our non-Haar random simulations reflect a butterfly effect in the limit $L \rightarrow \infty$. The connection between operator entangling power and the scaling of the bipartite $\mathcal{A}$-OTOC in this model is also worth exploring further, in the context of recent results on phase transitions in similar circuits \cite{Lovas,Turkeshi}; there may exist a similar threshold value above which there is a clear butterfly effect. Finally, it would be worthwhile to determine whether other information scrambling metrics display a similar effect and study their relationship to the behavior of the bipartite $\mathcal{A}$-OTOC.

This work also raises practical questions. The general relationship between the bipartite $\mathcal{A}$-OTOC and quantum channel capacity has yet to be explored. It may be interesting to explore it in this context -- the modulation of noise may be able to transmit information from one end of a spin chain to another. It is also conceivable that directly measuring the $\mathcal{A}$-OTOC could provide information about local noise in systems where global measurements are easier to perform than local ones. Finally, it would be an experimental achievement to directly measure this circuit model on quantum hardware.

\section{Acknowledgements}\label{acknowledgements}

ED and PZ acknowledge helpful discussions with Daniel Lidar. ED and PZ acknowledge partial support from the NSF award PHY2310227. FA acknowledges financial support from a University of Southern California "Philomela Myronis" scholarship. This research was (partially) sponsored by the Army Research Office and was accomplished under Grant Number W911NF-20-1-0075. The views and conclusions contained in this document are those of the authors and should not be interpreted as representing the official policies, either expressed or implied, of the Army Research Office or the U.S. Government. The U.S. Government is authorized to reproduce and distribute reprints for Government purposes notwithstanding any copyright notation herein.

\bibliographystyle{unsrt}
\bibliography{references}

\appendix\label{appendix}
\begin{widetext}

\section{Linearization of A-OTOC formula}

To get the $\mathcal{A}$-OTOC \cref{aotoc_def} into a form in which we can take a Haar average over $U$, we rely on successive applications of the equation

\begin{equation}\label{swap_identity}
    \Tr(S (A\otimes B)) = \Tr(AB),
\end{equation}

where $S$ is the swap operator.

The starting point of the derivation is the \cref{aotoc_swap}. Let $\mathcal{H}^{\otimes 4} \cong \otimes_i \mathcal{H}_i$, where each $\mathcal{H}_i \cong \mathcal{H}_{A_i} \otimes \mathcal{H}_{B_i}$. Henceforth in this section, we will take $S_{A_i A_j}$ to be the operator that swaps subsystem $A_i$ of Hilbert space copy $\mathcal{H}_i$ with subsystem $A_j$ of Hilbert space copy $\mathcal{H}_j$. We will take $S_{ij}$ to be the swap between $\mathcal{H}_i$ and $\mathcal{H}_j$.

To improve readability, let $d_B S_{ij}-S_{A_i A_j} \equiv M_{ij}$ (see \cref{aotoc_swap}), and let the Kraus operators of $\mathcal{E}^{\otimes 2}$ be $\{\Lambda_\alpha\}_\alpha$.

Rewriting \cref{aotoc_swap}, we have 
\begin{equation}
     G(\mathcal{C})=\frac{1}{d^2}\Tr(M_{12} (\mathcal{U}^\dagger)^{\otimes 2}\mathcal{E}^{\otimes 2}\mathcal{U}^{\otimes 2} (S_{A_1 A_2})).
\end{equation}

Expanding this gives
\begin{equation}
    G(\mathcal{C})=\frac{1}{d^2}\sum_\alpha \Tr((U^\dagger)^{\otimes 2} \Lambda_\alpha U^{\otimes 2} S_{A_1 A_2} (U^\dagger)^{\otimes 2} \Lambda_\alpha^\dagger U^{\otimes 2} M_{12}).
\end{equation}

Now we use the swap identity and expand from $\mathcal{H}^{\otimes 2}$ to $\mathcal{H}^{\otimes 4}$:
\begin{equation}
    = \frac{1}{d^2}\sum_\alpha \Tr(S_{13}S_{24}(U^\dagger)^{\otimes 4} (\Lambda_\alpha\otimes\Lambda_\alpha^\dagger) (U^{\otimes 4}) (S_{A_1 A_2} \otimes M_{34})).
\end{equation}

Finally, expanding out $M_{34}$, we get:
\begin{equation}\label{4designequation}
    G(\mathcal{C})=\frac{1}{d^2}\Tr\left(S_{(13)(24)} (U^\dagger)^{\otimes 4} \left(\sum_\alpha \Lambda_\alpha \otimes \Lambda_\alpha^\dagger \right) U^{\otimes 4} \left(S_{A_1 A_2}\otimes(d_B S_{(34)} - S_{A_3 A_4}) \right) \right).
\end{equation}

Now, to calculate $\overline{G(\mathcal{C})}$, we must to evaluate the following (and plug back into the above formula): $$\mathbb{E}_{U \in \mathcal{U}\left(2^L\right)}\left[\left(U^{\dagger}\right)^{\otimes 4} \left( \sum_\alpha \Lambda_\alpha \otimes \Lambda_\alpha^\dagger \right) U^{\otimes 4}\right].$$

Note that the set of Kraus operators $\{\Lambda_\alpha \}_\alpha$ of $\mathcal{E}^{\otimes 2}$ is just the set of all tensor products of Kraus operators of $\mathcal{E}$, namely $\{K_i\}_i$. Thus, 
\begin{equation}
    \sum_\alpha \Lambda_\alpha \otimes \Lambda_\alpha^\dagger = \sum_{ij} K_i \otimes K_j \otimes K_i^\dagger \otimes K_j^\dagger.
\end{equation}

This operator will take the place of $O$ in what follows for the final calculation.

\section{Schur-Weyl Duality}

Here, we follow the articulate notation of the supplementary material of \cite{Turkeshi}. To simplify the subsequent equations, let

\begin{equation}
    \Phi_{\text {Haar }}^{(k)}(O)=\mathbb{E}_{U \in \mathcal{U}\left(2^L\right)}\left[\left(U^{\dagger}\right)^{\otimes k} O U^{\otimes k}\right] \equiv \int_{\text {Haar }} d \mu(U)\left(U^{\dagger}\right)^{\otimes k} O U^{\otimes k}.
\end{equation}

Via Schur-Weyl duality \cite{weingarten_1}, it follows that:
\begin{equation}
    \Phi_{\text {Haar }}^{(k)}(O)=\sum_{\pi \in \mathcal{S}_k} b_\pi(O) T_\pi,
\end{equation}

where $\mathcal{S}_k$ is the permutation group over $k$ elements and $T_\pi$ is the representation of the permutation $\pi$ acting on $\left(\mathbb{C}^{2^L}\right)^{\otimes k}$. Note that $T_\pi=\otimes_{i=1}^N t_\pi^{(i)}$, where $t_\pi^{(i)}$ is the representation of $\pi$ on the $i$-th qubit, i.e. each of these permutations is the tensor of single-qubit permutations across the $k$ copies of the system Hilbert space.

It further follows that:
\begin{equation}\label{traces}
    b_\pi=\sum_{\sigma \in S_k} W_{\pi, \sigma} \Tr\left(O T_\sigma\right),
\end{equation}

where $W_{\pi, \sigma}$ is the \textit{Weingarten symbol} corresponding to the permutations $\pi, \sigma$. 

Recall that for the symmetric group $\mathcal{S}_k$, the irreps correspond to the integer partitions of $k$. An integer partition of $k$ is an (ordered, for uniqueness) set of positive integers which add up to $k$. Let $\lambda$ refer to an arbitrary integer partition of $k$, let $\Pi_\lambda$ be the projection onto the irrep (corresponding to) $\lambda$, $d_\lambda$ be the irrep's dimension, and $\chi_\lambda$ be its character function. Then
\begin{equation}
    W_{\pi, \sigma}=\sum_{\lambda \vdash k} \frac{d_\lambda^2}{(k!)^2} \frac{\chi_\lambda(\pi \sigma)}{\operatorname{tr}\left(\Pi_\lambda\right)}.
\end{equation}

To calculate the traces in \cref{traces}, we make use of the following identity. Let $T_\sigma \in \mathcal{S}_4$ have the cyclic representation $(ab)(cd)$, where $\left\{a...d\right\} \in \left\{1,2,3,4\right\}$\footnote{The choice of a $(2,2)$ cycle structure here is arbitrary and made to illustrate the identity.}. Let $O = \otimes_{k=1}^4 M_k$. Then

\begin{equation}
    \Tr(T_\sigma O) = \Tr(M_a M_b)\Tr(M_c M_d).
\end{equation}

This is just a simple generalization of the above swap formula.

Finally, to yield the formulae in \cref{mainresults}, we created a Mathematica notebook, making use of the excellent group-theoretic functions written in the notebook of \cite{Turkeshi}.

\section{Typicality of Averaged Quantities}\label{app_typicality}

In this section, we explicitly demonstrate a typicality result for $G(\mathcal{C})$ in the case of unitary noise $\mathcal{E}$. The proof does not straightforwardly extend to dissipative noise channels, though we offer some intuition suggesting similar results ought to hold.

We begin by defining:

\begin{definition}[Lipschitz Continuity]
    Let $f: U(d) \rightarrow \mathbb{R}$ such that
    \begin{equation}
        \lVert f(V) - f(W) \rVert \leq K \lVert V - W \rVert_2 \ \forall \ V, W \in U(d).
    \end{equation}

    Then $f$ is \textbf{Lipschitz continuous} with Lipschitz constant $K$.
\end{definition}

Now, we restate the well-known:

\begin{lemma}[Levy's Lemma for Unitary Operators\cite{CotlerLevyLemma}]

\[
\mathrm{Prob}\left\{ |f(U) - \mathbb{E}_U f(U)| \geq \epsilon \right\} \leq 2\exp\left( -\frac{2d\epsilon^2}{9\pi^3 K^2} \right).
\]

\end{lemma}

Now, we must solve for the Lipschitz constant of $G(\mathcal{C})$ with respect to $U$. To do this, we use the formula for the bipartite $\mathcal{A}$-OTOC under unitary noise $\mathcal{E} = \mathcal{W}$:

\begin{equation}
    G(\mathcal{C}) = 1-\frac{1}{d^2}\Tr\left(S_{(13)(24)} (\mathcal{U}^\dagger)^{\otimes 4} \left(\mathcal{W}^{\otimes 2} \right) \left(S_{(A_1A_2)(A_3 A_4)} \right) \right),
\end{equation}

where $\mathcal{U}^{\dagger}$ is the superoperator channel of $U^\dagger$.

We now proceed similarly in fashion to \cite{Zanardi2022quantumscramblingof}. We define the norm $\| \mathcal{T}\|_{2,2} \equiv \sup_{\|X\|_2 =1}\| \mathcal{T}(X)\|_2$. Let $\mathcal{C}_U=\mathcal{U}^\dagger \mathcal{W} \mathcal{U}$ and $\mathcal{C}_V=\mathcal{V}^\dagger \mathcal{W} \mathcal{V}$, where $\mathcal{U}, \mathcal{W}, \mathcal{V}$ are unitary channels.

We have:

\begin{equation}
    \lvert G(\mathcal{C}_U) - G(\mathcal{C}_V) \rvert = \left\lvert \frac{1}{d^2} \left\langle S_{(A_1 A_2)}, ({\mathcal{U}^\dagger}^{\otimes 2} \mathcal{W}^{\otimes 2} \mathcal{U}^{\otimes 2} - {\mathcal{V}^\dagger}^{\otimes 2} \mathcal{W}^{\otimes 2} \mathcal{V}^{\otimes 2})(S_{(A_1 A_2)}) \right\rangle \right\rvert.
\end{equation}

Using the Cauchy-Schwarz inequality, we see this expression is

\begin{equation}
    \leq \frac{1}{d^2} \lVert S_{(A_1 A_2)} \rVert_2^2 \; \lVert {\mathcal{U}^\dagger}^{\otimes 2} \mathcal{W}^{\otimes 2} \mathcal{U}^{\otimes 2} - {\mathcal{V}^\dagger}^{\otimes 2} \mathcal{W}^{\otimes 2} \mathcal{V}^{\otimes 2} \rVert_{2,2}.
\end{equation}

Since $S_{(A_1 A_2)}$ is unitary, its 2-norm squared equals $d^2$. Using this, and adding and subtracting $\mathcal{V}^\dagger \mathcal{W} \mathcal{V} \otimes \mathcal{U}^\dagger \mathcal{W}\mathcal{U}$, we get:

\begin{equation}
    = \lVert (\mathcal{U}^\dagger \mathcal{W} \mathcal{U} - \mathcal{V}^\dagger \mathcal{W} \mathcal{V}) \otimes \mathcal{U}^\dagger \mathcal{W} \mathcal{U} + \mathcal{V}^\dagger \mathcal{W} \mathcal{V} \otimes (\mathcal{U}^\dagger \mathcal{W} \mathcal{U} - \mathcal{V}^\dagger \mathcal{W} \mathcal{V})\rVert_{2,2}.
\end{equation}

Using the triangle inequality, we can upper bound this by:

\begin{equation}
    \leq 2 \lVert \mathcal{U}^\dagger \mathcal{W} \mathcal{U} - \mathcal{V}^\dagger \mathcal{W} \mathcal{V} \rVert_{2,2} = 2 \sup_{\lVert X \rVert_2 =1} \lVert (\mathcal{U}^\dagger \mathcal{W} \mathcal{U} - \mathcal{V}^\dagger \mathcal{W} \mathcal{V})(X) \rVert_2.
\end{equation}

Now, we write out the superoperator explicitly in terms of its unitaries:

\begin{equation}
    = 2 \sup_{\lVert X \rVert_2 =1} \lVert (U^\dagger W U - V^\dagger W V) X (U^\dagger W^\dagger U) + (V^\dagger W V) X (U^\dagger W^\dagger U - V^\dagger W^\dagger V) \rVert_2 .
\end{equation}

We again apply the triangle inequality to get:

\begin{equation}
    \leq 2 \sup_{\lVert X \rVert_2 =1} \lVert \mathcal{U}^\dagger \mathcal{W} \mathcal{U} - \mathcal{V}^\dagger \mathcal{W} \mathcal{V} \rVert_2 \, \lVert X U^\dagger W^\dagger U \rVert_2 + \lVert V^\dagger W V X \rVert_2 \, \lVert U^\dagger W^\dagger U - V^\dagger W^\dagger V \rVert_2.
\end{equation}

Note that each norm term containing $X$ is the norm of $X$ multiplied by a unitary, which preserves the norm. Thus, both of these terms equal $\lVert X \rVert_2 = 1$. This simplies the above expression, and allows us to repeat the ``telescoping" trick of adding and subtracting $V^\dagger W U$ (though these are now operators instead of superoperators). We arrive at:

\begin{equation}
    = 4 \lVert U^\dagger W U - V^\dagger W V \rVert_2 = 4 \lVert (U^\dagger - V^\dagger) W U + V^\dagger W (U-V) \rVert_2.
\end{equation}

Finally, we apply the triangle inequality again to arrive at:

\begin{equation}
    \leq 4 (\lVert (U^\dagger - V^\dagger)W U \rVert_2 + \lVert V^\dagger W (U-V) \rVert_2 ) = 8 \lVert U - V \rVert_2.
\end{equation}

We can then plug the Lipschitz constant $K = 8$ into Levy's lemma, yielding:

\begin{equation}
    \mathrm{Prob}\left\{ |G(\mathcal{C}_U) - \overline{G(\mathcal{C})} \geq \epsilon \right\} \leq 2\exp\left( -\frac{d\epsilon^2}{288\pi^3} \right).
\end{equation}

As $L \rightarrow \infty$, fluctuations about the mean are exponentially suppressed by $d \rightarrow \infty$. The Haar-average result is thus typical as $L \rightarrow \infty$.

This proof does not immediately flow through to the case of dissipative noise due to the pernicious factor of $d_B$ in \cref{aotoc_swap} that prevents the derivation of a Lipschitz constant independent of system size. However, we have some intuition that similar behavior should be displayed in the thermodynamic limit. First, we numerically observe low variance about the mean in our simulations of noisy Haar-random brickwork circuits. Furthermore, the value of $\overline{G(\mathcal{C})}$ for dissipative noise has a strictly smaller upper bound than for coherent noise; it seems that these lowers values should also curtail the possibility of large fluctuations. Finally, along similar lines, dissipative noise leaks information outside the system. One may expect that this leakage also leaks some information about the unitary encoding-decoding channel conjugation the noise, thereby reducing the differences in behavior among different unitaries. Of course, these do not constitute a proof, and a formal derivation is left for future work.

\end{widetext}

\end{document}